\theoremstyle{definition}
\newtheorem{theorem}{Theorem}
\newtheorem{lemma}[theorem]{Lemma}
\newtheorem{proposition}[theorem]{Proposition}
\newtheorem{corollary}[theorem]{Corollary}
\newtheorem{example}[theorem]{Example}
\newcommand{\tenum}[2]{#1_1,\ldots,#1_#2}
\newcommand{\disjun}{\uplus}
\newcommand{\V}{\mathcal{V}}
\newcommand{\T}{\mathcal{T}}
\newcommand{\MF}{\mathcal{F}}
\newcommand{\MR}{\mathcal{P}}
\newcommand{\sat}{\vDash}
\newcommand{\imp}{\rightarrow}
\newcommand{\I}[1]{\mathcal{I}_{#1}}
\newcommand{\vars}{\operatorname{vars}}
\newcommand{\tatoms}{\operatorname{ta}}
\newcommand{\rrs}{\operatorname{rrs}}
\newcommand{\shortrules}[6]{\noindent\begin{minipage}{#6ex}{\bfseries #1}\end{minipage} $\;$ #2 $\;\Rightarrow_{\operatorname{#5}}\;$ #3 \par\smallskip\noindent #4} 
\newcommand{\rfl}{\text{ref}}
\newcommand{\irr}{\text{irr}}
\newcommand{\cA}{\mathcal{A}}
\newcommand{\cH}{\mathcal{H}}
\newcommand{\fUA}{\mathsf{A}}
\newcommand{\fUH}{\mathsf{H}}
\newcommand{\<}{\langle}
\renewcommand{\>}{\rangle}
\newcommand{\fa}{\mathsf{a}}
\newcommand{\fb}{\mathsf{b}}
\newcommand{\fc}{\mathsf{c}}
\newcommand{\fd}{\mathsf{d}}
\newcommand{\fe}{\mathsf{e}}
\begin{document}

\title{On the Expressivity and Applicability of Model Representation Formalisms}

\author{
	\begin{tabular}{l}
		Andreas Teucke\\
		\small\textit{Max Planck Institute for Informatics, Saarland Informatics Campus, Saarbr\"ucken, Germany}
	\end{tabular}
	\and
	\begin{tabular}{l}
		Marco Voigt\\
		\small\textit{Max Planck Institute for Informatics, Saarland Informatics Campus, Saarbr\"ucken, Germany,}\\
		\small\textit{Saarbr\"ucken Graduate School of Computer Science}
	\end{tabular}
	\and
	\begin{tabular}{l}
		Christoph Weidenbach \\
		\small\textit{Max Planck Institute for Informatics, Saarland Informatics Campus, Saarbr\"ucken, Germany}
	\end{tabular}
}	
\date{}
\maketitle

\begin{abstract}
  A number of first-order calculi employ an explicit model representation formalism
  for automated reasoning and for detecting satisfiability.
  Many of these formalisms can represent infinite Herbrand models.
  The first-order fragment of monadic, shallow, linear, Horn (MSLH)
  clauses, is such a formalism used in the approximation refinement calculus.
  Our first result is a finite model property for MSLH
  clause sets. Therefore, MSLH clause sets cannot represent models
  of clause sets with inherently infinite models.
  Through a translation to tree automata, we further show  that this limitation
  also applies to the linear fragments of implicit generalizations,
  which is the formalism used in the model-evolution calculus,
  to atoms with disequality constraints, the formalisms used in the
  non-redundant clause learning calculus (NRCL), and
  to atoms with membership constraints, a formalism used for example in
  decision procedures for algebraic data types.
  Although these formalisms cannot represent models
  of clause sets with inherently infinite models, through an additional
  approximation step they can. This is our second main result.
  For clause sets including the definition of an equivalence relation
  with the help of an additional, novel approximation, called reflexive relation splitting,
  the approximation refinement calculus can automatically
  show satisfiability through the MSLH clause set formalism.
\end{abstract}


\section{Introduction} \label{sec:intro}

Proving satisfiability of a first-order clause set is more difficult
than proving unsatisfiability, in general. Still, for many applications
the detection of failing refutations by establishing a counter model is more than desirable. In the
past, several methods, calculi and systems have been presented that can detect
satisfiability of a clause set, in particular, if there is a finite model that is not too large. 
The approaches can be separated into the following classes: 
\begin{enumerate}[label=(\arabic*)]
\item the model building is integrated into a first-order calculus or a decision procedure forsome fragment, directly operating on
   the first-order clause set, complete for unsatisfiability, e.g.,~\cite{CaferraLeitschPeltier04,BaumgartnerFuchsTinelli06,PiskacEtAl10,SuterEtAl2010,AlagiWeidenbach15,BonacinaEtAl15,BonacinaPlaisted16}, \label{apro1}
\item the model building is integrated into a first-order calculus that operates on the
   first-order clause set modulo an approximation, complete for unsatisfiability, e.g.,~\cite{Korovin13,TeuckeW16,HernandezKorovin17}, \label{apro2}
\item the model building aims at finding finite models without being 
      complete for unsatisfiability, e.g.,~\cite{Slaney94,McCune2003,ClaessenSoerensson03}, \label{apro3}
\item the model building aims at finding finite and infinite models without being 
      complete for unsatisfiability, e.g.,~\cite{Peltier09}, \label{apro4}
\end{enumerate}
where superposition~\cite{BachmairGanzinger94b} does not belong to any of the above classes, because the model building
is implicit and reached by a finite saturation of the clause set modulo inferences and the elimination
of redundant clauses.

The approaches in classes \ref{apro1} and \ref{apro2}
select inferences with respect to the explicit (partial) model by identifying a false clause
(instance). Therefore, the representation of models needs to be effective, e.g., falsity of a clause (instance)
with respect to the model needs to be (efficiently) decidable. This is not the case for superposition,
where this problem is undecidable.  The clause set consisting of $R(\epsilon,\epsilon)$ and 
clauses  $R(x,y) \imp R(t_i[x], s_i[y])$
where the $t_i, s_i$ are terms build over the monadic functions $g, h$ and variables $x$, $y$, respectively,
is saturated with strictly
maximal atoms $R(t_i[x], s_i[y])$ and can encode the words of a PCP~\cite{Post46} 
$({\tt w}_1,\ldots,{\tt w}_n)$, $({\tt v}_1,\ldots,{\tt v}_n)$ over letters $g, h$. 
That means words are represented by nestings of monadic functions.
The PCP has a solution iff a ground atom $R(g(t),g(t))$ or $R(h(t),h(t))$ is a consequence of the above clause set. 
This corresponds to testing whether one of the clauses $R(g(x),g(x))$ or $R(h(x),h(x))$ has
a false instance with respect to the model generated by the superposition model operator. 

Reasoning with respect to a (partial) model assumption has advantages. The superposition completeness proof
shows that an inference with a clause that is false in the current partial model is not redundant~\cite{BachmairGanzinger94b}.
This has meanwhile also been shown for the CDCL~\cite{Weidenbach15} and the NRCL~\cite{AlagiWeidenbach15} calculus.
The non-redundant inference property might also hold for other calculi of classes \ref{apro1} and \ref{apro2}.

Our first contribution is showing that the model representation used in~\cite{TeuckeW16}, monadic shallow linear Horn clauses 
(MSLH) has the finite model property, Section~\ref{sec:moprop}. This means that
if a finite MSLH clause set has a model, it also has a finite model. Hence, MSLH clause sets cannot be used
directly to represent models of clause sets with inherently infinite models.
A further consequence is that any calculus in class~\ref{apro1}, where the model representation
can be represented by an MSLH clause set, cannot terminate on clause sets with inherently infinite models.
A more detailed discussion of this aspect is contained in Section~\ref{sec:modrepform}. 

The fact that MSLH clause sets have the finite model property does not mean that the approximation refinement (AR) calculus
presented in \cite{TeuckeW16} cannot be used for finding infinite models of clause sets with inherently infinite models.
The reason is that the MSLH model representation in \cite{TeuckeW16} does not directly relate
to the clause set, but via an approximation. For the approximation it is shown in a constructive
way that it preserves satisfiability. This is done modulo the minimal Herbrand model of a saturated
MSLH clause set. Such Herbrand models become infinite as soon as there are non-constant function symbols.
So the question is whether AR can actually terminate on
clause sets with inherently infinite models. 
In Section~\ref{sec:mofind}, we show that this is the
case for certain classes of such clause sets relying on reflexivity of a binary (equivalence) relation.
The technique we propose is an additional approximation called \emph{reflexive relation splitting}.
A similar relationship between a clause set and its approximation was 
already observed in \cite{Peltier09} where an approximation of a first-order clause
set into a class of tree automata is used in order to find finite and infinite models.

Our results concerning the MSLH fragment and the reflexive relation splitting modulo
the AR calculus can be demonstrated by the following example.
Consider the following three clauses defining a reflexive binary relation $R$
(see~\cite{CaferraLeitschPeltier04}, page~55 for further discussion of this example). \newline
\centerline{$\{R(x,x), \quad R(g(x),g(y)) \rightarrow R(x,y), \quad \neg R(g(x),c)\}$}
The clause set has only infinite models. No resolution inference between $R(x,x)$ and $\neg R(g(x),c)$
is possible. Following the AR approach~\cite{TeuckeW16}, the MSLH clause set \newline
\centerline{$\{T(f_R(x,y)), \quad T(f_R(g(x),g(y))) \rightarrow T(f_R(x,y)), \quad \neg T(f_R(g(x), c))\}$}
is generated. We write unit clauses as single literals, and non-unit clauses as implications.
The relation $R$ is translated into a binary function $f_R$ over a monadic predicate $T$. 
The approximation is the replacement of $R(x,x)$ by $T(f_R(x,y))$,
where now the connection between the non-linear occurrences of $x$ is lost. As a consequence, a refutation
containing a resolution step between $T(f_R(x,y))$ and $\neg T(f_R(g(x), c))$ with substitution
$\{x\mapsto g(v), y\mapsto c\}$
is possible, which cannot be lifted to the original clause set because $g(v)$ and $c$ are not unifiable. The refinement then
excludes this particular instance by generating $R(g(x),g(x))$, however, after approximating
this clause, the empty clause can be derived again.
This time the derivation also uses the second clause,
where the substitution instance of the refutation contains one further nesting of $g$.
The approximation refinement approach does not terminate on this example.

Obviously, if in the approximation the inference between $T(f_R(x,x))$ and $\neg T(f_R(g(x), c))$ can be blocked,
saturation will terminate without finding a contradiction. As said, in the original clause set this
inference is not possible, because of the non-linear occurrence of $x$. Now the idea is to split the relation
$R$ into its reflexive and irreflexive part, denoted by the two predicates $R_\rfl$ and $R_\irr$, respectively.
The original clause set is satisfiable if and only if the following clause set is satisfiable\newline
\centerline{$\{R_\rfl(x,x), \quad R_\irr(g(x),g(y)) \rightarrow R_\irr(x,y), \quad \neg R_\irr(g(x),c)\}$,}
details are explained in Section~\ref{sec:mofind}. After approximation it becomes\newline
\centerline{
\begin{tabular}{cr}
	$\{T(f_{R_\rfl}(x,y))^*, T(f_{R_\irr}(g(x),g(y)))^+ \rightarrow T(f_{R_\irr}(x,y)), \neg T(f_{R_\irr}(g(x), c))^*\}$ & ($\dagger$)\\
  \end{tabular}}
 \newline
where $^*$ highlights maximal and $^+$ selected literals of the ordered resolution calculus
used to decide MSLH clauses~\cite{TeuckeW15,TeuckeW16}. 
There are no possible inferences generating further clauses, i.e.\ the set is already saturated.

The infinite minimal Herbrand model is $\I{} = \{T(f_{R_\rfl}(g^i(c), g^j(c))) \mid i,j \geq 0\}$
which is also a model for the clause set before approximation~\cite{TeuckeW15,TeuckeW16} by
simply undoing the shift of $R_\irr$, $R_\rfl$ to the function level: 
$\I{} = \{R_\rfl(g^i(c), g^j(c))) \mid i, j \geq 0\}$. 
Nestings of functions in the Herbrand model representing relations, e.g., $f_{R_\rfl}$,  
can be prevented by adding further MSLH clauses. We omit these here for simplicity.
This model can then be translated,
see the proof of Lemma~\ref{lem:reflrelsplit}, into the Herbrand model 
$\I{} = \{R(g^i(c), g^i(c))) \mid i \geq 0\}$
of the original clause set.

In Section~\ref{sec:moprop}, we prove a finite model for saturated, satisfiable MSLH clause sets.
For the example, see ($\dagger$), the thus constructed model has the 
domain $\fUA := \{ \fa_c, \fa^{(1)}, \fa^{(2)}, \fa^{(3)}, \fb^{(1)},$ $\fb^{(2)}, \fb^{(3)} \}$.
The predicate $T$ is interpreted with the set $\{ \fb^{(1)}, \fb^{(2)}, \fb^{(3)} \}$.
For the constant $c$ we use the distinguished element $\fa_c$.
The interpretation of the function $f_{R_\rfl}$ is given in the following function table: \newline
\centerline{$ 
	\begin{array}{c@{\hspace{5ex}}r@{\;}c@{\;}c@{\hspace{3ex}}l}
			& \< \fa_c, \fa_c \> 		&\longmapsto	& \fb^{(1)} \\
			& \< \fa^{(i)}, \fa^{(i)} \>	&\longmapsto	& \fb^{(j)} 	& \text{for every $i$ and some $j \neq i$} \\
			& \< \fb^{(i)}, \fb^{(i)} \> 	&\longmapsto	& \fb^{(j)} 	& \text{for every $i$ and some $j \neq i$} \\
			& \< \fc, \fd \> 			&\longmapsto	& \fa^{(j)} 	& \text{for any $\fc, \fd \in \fUA$ with $\fc \neq \fd$ and} \\
			&					&			&		& \text{some $j$ chosen such that for any $i$,} \\
			&					&			&		& \text{if $\fc$ or $\fd$ is equal to $\fa^{(i)}$ or $\fb^{(i)}$, then $j \neq i$}
	\end{array}
$}
For the function $f_{R_\irr}$ we get a similar function table in which every pair $\<\fc,\fd\>$ 
is mapped to some $\fa^{(j)}$, where $j$ is chosen such that $\fc, \fd \neq \fa^{(j)}$.
Finally, the interpretation of the function $g$ is given by 
	$g(\fa_c) = \fa^{(1)}$, 
	$g(\fa^{(i)}) = \fa^{(j)}$ and 
	$g(\fb^{(i)}) = \fa^{(j)}$ for every $i$ and some $j \neq i$.

The paper is now organized as follows: after fixing some notions and notations, Section~\ref{sec:prelim},
the finite model property of MSLH clause sets is shown in Section~\ref{sec:moprop}. Consequences of this result
for other model representation formalisms are discussed in Section~\ref{sec:modrepform}. In Section~\ref{sec:mofind}
reflexive relation splitting is introduced and its application to AR investigated.
The present paper ends with a discussion on the obtained results and future research directions, Section~\ref{sec:discussion}.


\section{Preliminaries} \label{sec:prelim}

We consider a standard first-order language without equality where
letters $v,w,$ $x,y,z$ denote variables, $f,g,h$ functions, $a,b,c$ constants, $s,t$ terms, and Greek letters 
$\sigma,\tau,\rho$ are used for substitutions. 
$S,P,Q,R$ denote predicates, $A,B$ atoms, $E,K,L$ literals, $C,D$ clauses, $N$ clause sets  and $\V$ sets of variables.
The notation $[\neg] A$ denotes $A$ or its negation.
The signature $\Sigma=(\MF,\MR)$ consists of two disjoint, non-empty, in general infinite sets of  function and predicate symbols
$\MF$ and $\MR$, respectively.
The set of all \emph{terms} over the variables in $\V$ is $\T(\MF,\V)$.
If there are no variables, then terms, literals and clauses are called \emph{ground}, respectively.
A \emph{substitution} $\sigma$ is denoted by pairs $\{x \mapsto t\}$.
A substitution $\sigma$ is a \emph{grounding} substitution for a term, atom, literal, clause
if the application of $\sigma$ yields a ground term, ground atom, ground literal, ground clause, respectively.

The set of \emph{free} variables of an atom $A$ (term $t$, literal $L$, clause $C$) is denoted by $\vars(A)$ 
($\vars(t), \vars(L), \vars(C)$).
A predicate with exactly one argument is called \emph{monadic}. 
A term is \emph{complex} if it is not a variable and \emph{shallow} if it is a constant, a variable,
or of the form $f(\tenum{x}{n})$.
A term, atom is called \emph{linear} if there are no duplicate variable occurrences. 

A \emph{clause} is a multiset of literals which we write as an implication $\Gamma \imp \Delta$ 
where the atoms in the multiset $\Delta$ (the\emph{ succedent}) denote the positive literals and the atoms in the multiset $\Gamma$ (the \emph{antecedent}) the negative literals. Alternatively, we write a clause also as a disjunction of its literals.
We write $\square$ for the empty clause.
We abbreviate disjoint set union with sequencing, for example,  we write 
$\Gamma,\Gamma' \imp \Delta,L$ instead of $\Gamma \cup \Gamma' \imp \Delta \cup \{L\}$.
A clause $\Gamma \imp \Delta$ is called an \emph{MSLH} clause, if 
(i)~$\Delta$ contains at most one atom, i.e., the clause is Horn,
(ii)~all occurring predicates are monadic,
(iii)~the argument of any monadic atom in $\Delta$ is shallow and linear.
The first-order fragment consisting of finite MSLH clause sets we call \emph{MSLH}. 

An \emph{atom ordering} $\prec$ is an irreflexive, well-founded, total ordering on ground atoms.  
It is lifted to literals by defining $A \prec \neg A \prec B$
for any atoms $A$, $B$ with $A\prec B$. It is lifted to clauses by its multiset extension.
The ordering is lifted from the ground level through ground instantiation: for two different atoms $A$, $B$
containing variables, $A\prec B$ if $A\sigma \prec B\sigma$ for all grounding substitutions $\sigma$ and
the atoms are incomparable otherwise. A literal $L$ is \emph{maximal} (\emph{strictly maximal}) in a clause $C\lor L$ if there is
no literal $K\in C$ with $L\prec K$ ($L\preceq K$).
The clause ordering is compatible with the atom ordering; 
if the maximal atom in $C$ is greater than the maximal atom in $D$ then $D \prec C$.
We use  $\prec$ simultaneously to denote an atom ordering and its multiset, literal, and clause extensions.
For a ground clause set $N$ and clause $C$, the set $N^{\prec C}=\{D \in N \mid D \prec C\}$ 
denotes the clauses of $N$ smaller than $C$.

As usual, we interpret atoms, clauses, and clause sets 
with respect to \emph{structures} $\cA$, also called \emph{interpretations}, consisting of a nonempty universe $\fUA$ 
and interpretations $c^\cA$, $f^\cA$, and $P^\cA$ of all occurring constants, functions, and predicates. 
We often use a special kind of interpretations, called \emph{Herbrand interpretations}, 
whose universe is the set of all ground terms.
A \emph{Herbrand interpretation} $\I{}$ is represented by a -- possibly infinite -- set of ground atoms.
A ground atom $A$ is \emph{true} in $\I{}$ if $A\in\I{}$ and \emph{false}, otherwise.
$\I{}$ is said to \emph{satisfy} a ground clause $C= \Gamma \imp \Delta$, denoted by $\I{}\sat C$, 
if   $\Delta \cap \I{} \neq \emptyset$ or $\Gamma \not\subseteq \I{}$.
A non-ground clause $C$ is satisfied by $\I{}$ if $\I{}\sat C\sigma$ for every grounding substitution $\sigma$.
An interpretation $\I{}$ is called a \emph{model} of $N$, $\I{}\sat N$, if $\I{}\sat C$ for every $C\in N$. 
A Herbrand model $\I{}$ of $N$ is considered \emph{minimal} (with respect to set inclusion) if there is no model $\I{}'$ 
with $\I{}'\subset \I{}$ and $\I{}'\sat N$.
A set of clauses $N$ is \emph{satisfiable}, if there exists a model that satisfies $N$. 
Otherwise, the set is \emph{unsatisfiable}.

The superpositon calculus~\cite{BachmairGanzinger94b} restricted to first-order logic without equality results in the ordered
resolution calculus together with the superposition redundancy criterion and partial model operator.
For ordered resolution, a selection function is assumed that may select negative literals in clauses.
Then $(C\lor D)\sigma$ is an ordered resolution inference  between a clause $C\lor A$ and a clause $D\lor \neg B$,
if (i)~$\sigma$ is the mgu between $A$ and $B$, (ii)~$A\sigma$ is strictly maximal in $(C\lor A)\sigma$
and nothing is selected in $C\lor A$, (iii)~$\neg B\sigma$ is maximal in $(D\lor \neg B)\sigma$ or selected.
The clause $(C\lor A)\sigma$ is an ordered factoring inference on a clause $C\lor A \lor A'$,
if (i)~$\sigma$ is the mgu between $A$ and $A'$, (ii)~$A\sigma$ is maximal in $(C\lor A)\sigma$
and nothing is selected in $C\lor A\lor A'$. 
Selection is stable under instantiation, i.e., if $\neg A$ is selected in $\neg A \lor C$ it is also selected
in $(\neg A \lor C)\sigma$, for any substitution $\sigma$.
A clause $C$ is \emph{redundant} with respect to a clause set $N$, if for all ground instances $C\sigma$
there are ground instances $D_1\sigma_1,\ldots,D_n\sigma_n$, $\{D_1,\ldots,D_n\}\subseteq N$,
$D_i\sigma_i \prec C\sigma$ for all $i$, such that $D_1\sigma_1,\ldots,D_n\sigma_n\models C\sigma$, i.e.,
$C\sigma$ is implied by smaller ground instances from clauses in $N$.
A clause set $N$ is called \emph{saturated} if all clauses generated by ordered resolution or
ordered factoring from clauses in 
$N$ are either redundant or contained in $N$.
Given a ground clause set $N$ and an ordering $\prec$ we can construct a (partial) Herbrand model $N_{\cal I}$
for $N$ by the superposition partial model operator inductively as follows:
\[
\renewcommand{\arraystretch}{1.3}
\begin{array}{rcl}
N_C & := &\bigcup_{D\prec C} \delta_D \\
\delta_D & := & \left\{\begin{array}{ll} 
\{P(\tenum{t}{n})\} & \mbox{if} \;D= D'\lor P(\tenum{t}{n}), P(\tenum{t}{n}) \;\mbox{strictly maximal,}\\
                    & \mbox{no literal selected in}\;D\; \mbox{and}\;N_D\not\models D\\
\emptyset & \mbox{otherwise}\\ \end{array}\right.\\
N_{\cal I} & := & \bigcup_{C\in N} \delta_C\\
\end{array}
\]
Clauses $C$ with $\delta_C \not = \emptyset$ are called \emph{productive}. For a non-ground
clause set $N$ we define $N_{\cal I} := (\{C\sigma \mid C\in N, \sigma \text{ grounding for $C$}\})_{\cal I}$.
The main completeness result of superposition is: for a clause set $N$ let $N^*$ be its (possibly infinite)
saturation, then either $\square\in N^*$ and $N$ is unsatisfiable, or $N^*_{\cal I} \models N$~\cite{BachmairGanzinger94b}.


\section{MSLH Model Properties}\label{sec:moprop}

By definition, Herbrand models for MSLH clause sets with non-constant function symbols have an infinite domain.
In what follows we show how to construct non-Herbrand models with finite domains for satisfiable finite MSLH clause sets.
The constructed model is a finite representation of the \emph{minimal Herbrand model}.

Consider a satisfiable finite MSLH clause set $N$.
It is known that $N$ can be finitely saturated using superposition (ordered resolution)
with an appropriate ordering and selection strategy such that the following property holds for the obtained saturated clause set $N^*$~\cite{Weidenbach99cade}.
Every clause $C$ in $N^*$ that is productive in the sense of the superposition model operator has the form 
{$C =\; P_1(x_1), \ldots, P_n(x_n) \rightarrow S(f(\tenum{y}{m}))$}
where $\{x_1,\ldots,x_n\}\subseteq \{y_1,\ldots,y_m\}$,\linebreak $f(\tenum{y}{m})$ is linear, and
$S(f(\tenum{y}{m}))$ is strictly maximal in $C$. 
Such a saturation can, e.g., be obtained by choosing for $\prec$ a Knuth-Bendix-Ordering (KBO) 
with weight one for all function symbols, variables, and a selection
strategy that selects a negative literal $P_i(t_i)$ in any clause $P_1(t_1), \ldots,$ $P_n(t_n) \rightarrow S(f(\tenum{y}{m}))$
if $t_i$ is not a variable, if $t_i$ is a variable that does not occur in $f(\tenum{y}{m})$, or if $t_i$ is a variable
in a clause $P_1(x), \ldots, P_n(x) \rightarrow S(x)$~\cite{Weidenbach99cade,TeuckeW17}.

\begin{proposition}[Entailed by Lemma~4 from~\cite{Weidenbach99cade}]\label{proposition:SaturationMSLHclauseSets}
	Consider a satisfiable finite MSLH clause set $N$.
	There is a finite MSLH clause set $N^*$ such that $N \subseteq N^*$ and $N \models N^*$ and
	there is a (minimal) Herbrand model $\cH \models N^*$ such that for every ground  atom $A$ of the form $S\bigl( f(s_1, \ldots, s_m)\bigr)$ we have $\cH \models A$ only if there is some clause $C$ in $N^*$ and a variable assignment $\beta$ with the following properties:
	\begin{enumerate}[label=(\alph{*}), ref=(\alph{*})]
		\item \label{enum:SaturationMSLHclauseSets:I}
			$C$ has the form $P_1(x_1), \ldots, P_n(x_n) \imp S\bigl( f(y_1, \ldots, y_m) \bigr)$ where $\{x_1,\ldots,x_n\}\subseteq \{y_1,\ldots,y_m\}$, the $y_1, \ldots, y_m$ are pairwise distinct, and $m,n \geq 0$;
		\item \label{enum:SaturationMSLHclauseSets:II}
			we have $\beta(y_i) = s_i$ for every $i$, $1 \leq i \leq m$; and
		\item  \label{enum:SaturationMSLHclauseSets:III}
			we have $\cH, \beta \models P_j(x_j)$ for every $j$, $1 \leq j \leq n$.
	\end{enumerate}	
\end{proposition}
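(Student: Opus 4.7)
The plan is to package the cited lemma from \cite{Weidenbach99cade} together with the superposition partial model operator. First I would apply the saturation procedure for MSLH with the KBO and selection strategy described in the paragraph preceding the proposition. Because MSLH is closed under ordered resolution and ordered factoring with this strategy, the procedure produces a finite saturated clause set $N^*$ with $N \subseteq N^*$ and $N \models N^*$ in which every clause is still MSLH. Since $N$ is satisfiable, $\square \notin N^*$.

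Next I would take $\cH$ to be the Herbrand interpretation $N^*_\cI$ built by the superposition model operator. The main completeness theorem of superposition gives $\cH \models N^*$, and hence $\cH \models N$. Minimality with respect to set inclusion is immediate from the definition of $\delta_D$, which contributes an atom only when the clause at hand would otherwise be falsified by the already-constructed partial interpretation; removing any atom from $\cH$ thus breaks some productive ground instance.

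For the characterization of true atoms, suppose $\cH \models S\bigl(f(s_1,\ldots,s_m)\bigr)$. By construction of $N^*_\cI$ there is a ground instance $C\sigma$ of some $C \in N^*$ with $\delta_{C\sigma} = \{ S(f(s_1,\ldots,s_m)) \}$. The cited lemma pins down the shape of such productive clauses in a saturated MSLH set under the chosen ordering and selection: $C$ must be of the form $P_1(x_1),\ldots,P_n(x_n) \imp S\bigl(f(y_1,\ldots,y_m)\bigr)$ with $f(y_1,\ldots,y_m)$ linear, $\{x_1,\ldots,x_n\} \subseteq \{y_1,\ldots,y_m\}$, and the succedent strictly maximal (any other shape would either admit a generating inference not yet reduced or be blocked by the selection function, contradicting productivity). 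Setting $\beta(y_i) := \sigma(y_i) = s_i$ for $1 \leq i \leq m$ gives condition \ref{enum:SaturationMSLHclauseSets:II}. For condition \ref{enum:SaturationMSLHclauseSets:III}, productivity of $C\sigma$ requires that its antecedent be true in the partial interpretation at the point of production, so $P_j(x_j\sigma) \in \cH$ for every $j$, which is exactly $\cH,\beta \models P_j(x_j)$.

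The main obstacle that would have to be discharged to make this self-contained is the closure argument embedded in the cited lemma, namely that every non-redundant resolvent or factor of MSLH clauses under the prescribed KBO and selection remains MSLH, and that no productive clause in the saturation can have a non-shallow or non-linear succedent term or extra variables in its antecedent. Once this is granted, as we do here by appealing to \cite{Weidenbach99cade}, the remainder of the proof is a direct traversal of the model-operator definition.
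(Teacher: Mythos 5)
Your proposal is correct and follows essentially the same route the paper takes: it invokes the saturation result of Lemma~4 from~\cite{Weidenbach99cade} under the KBO and selection strategy described just before the proposition, and then reads off the characterization of true atoms from the superposition partial model operator (equivalently, the unique minimal Herbrand model of the saturated Horn set). You also correctly isolate the one step that is genuinely outsourced to the cited lemma, namely the shape of productive clauses in the saturation.
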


Since $N^*$ is satisfiable and all its clauses are Horn\index{formula!Horn~$\sim$}, it possesses a unique minimal Herbrand model $\cH$\index{Herbrand model!minimal~$\sim$} (cf.~\cite{Ebbinghaus1994}, Chapter~XI, Theorem~3.8).
The property described in Proposition~\ref{proposition:SaturationMSLHclauseSets} provides the key to construct a finite model for $N$ and $N^*$ from $\cH$.
The following example is intended to illustrate the ideas underlying the construction in a simplified form.
\begin{example}\label{example:MSLHclauseSet}
	Consider the following set of MSLH clauses with constants $a$ and $b$:
		\begin{align*}
			N := \{& P(a),\; Q(b),\quad \neg P(z) \vee \neg Q(z) \vee \neg R(z) \\
				 &\neg P(u) \vee \neg P(u') \vee P(f(u,u')), \quad
				 \neg Q(v) \vee \neg Q(v') \vee Q(f(v,v')), \\
				& \neg P(x) \vee R(f(x,y)), \quad
				 \neg P(y) \vee R(f(x,y)), \\
				& \neg Q(x) \vee R(f(x,y)), \quad
				 \neg Q(y) \vee R(f(x,y))  \} ~.
		\end{align*}
	The set $N$ is satisfied by the minimal Herbrand interpretation $\cH$ with
	\begin{align*}
		P^\cH &:= \bigl\{ a, f(a,a),  f\bigl(a, f(a,a)\bigr), f\bigl(f(a,a), a\bigr),  f\bigl(f(a,a), f(a,a)\bigr), \ldots \bigr\} ~, \\
		Q^\cH &:= \bigl\{ b, f(b,b), f\bigl(b, f(b,b)\bigr), f\bigl(f(b,b), b\bigr), f\bigl(f(b,b), f(b,b)\bigr), \ldots \bigr\} ~, \\
		R^\cH &:= \bigl\{ f(s,t) \bigm| s \in P^\cH \text{ or } t \in Q^\cH \bigr\} ~.
	\end{align*}
	The interpretation $\cH$, together with $N^* := N$, satisfies the conditions of Proposition~\ref{proposition:SaturationMSLHclauseSets}:
	for every term $f(s,t)$ that belongs to $R^{\cH}$ we have that one of the clauses $\neg P(x) \vee R(f(x,y))$ or $\neg P(y) \vee R(f(x,y))$ or $\neg Q(x) \vee R(f(x,y))$ or $\neg Q(y) \vee R(f(x,y))$ enforces $\cH \models R(f(s,t))$ because of $\cH \models P(s)$ or $\cH \models P(t)$ or $\cH \models Q(s)$ or $\cH \models Q(t)$, respectively.
	Similarly, the presence of any term $f(\ldots)$ in $P^{\cH}$ or $Q^{\cH}$ is enforced by one of the clauses $\neg P(u) \vee \neg P(u') \vee P(f(u,u'))$ or $\neg Q(v) \vee \neg Q(v') \vee Q(f(v,v'))$.
	
	These requirements towards the minimality of $\cH$ provide us with a certain knowledge about distinct terms $f(s,t)$ and $f(s',t')$.
	Suppose the terms $s$ and $s'$ are indistinguishable with respect to their membership in $P^{\cH}, Q^{\cH}, R^{\cH}$.
	Further suppose that the same holds for the terms $t$ and $t'$.
	Then, $f(s,t)$ and $f(s',t')$ are also indistinguishable with respect to their belonging to $P^{\cH}, Q^{\cH}$, and $R^{\cH}$, because the arguments $s, t$ and $s', t'$ trigger the same productive clauses.
	A formal statement of this property is given in Lemma~\ref{lemma:congruenceArgumentColors}.
	
	Based on this observation, we use $\cH$ as a blueprint for a finite model $\cA$, which is depicted in Figure~\ref{figure:ExemplaryModelOfMSLHclauseSet}.
	The domain of $\cA$ shall be $\fUA := \{\fa, \fb, \fc, \fd, \fe\}$, and we set $a^\cA := \fa$ and $b^\cA := \fb$.
	The predicate symbols are interpreted by
	$
		P^\cA := \{ \fa, \fc \} , 
		Q^\cA := \{ \fb, \fd \} , 
		R^\cA := \{ \fc, \fd, \fe\}
	$.
	Moreover, we define \\
		\centerline{$
			\begin{array}{l@{\hspace{5ex}}l@{\hspace{5ex}}l@{\hspace{5ex}}l}
				f^\cA(\fa, \fa) := \fc &
				f^\cA(\fa, \fc) := \fc &
				f^\cA(\fc, \fa) := \fc &
				f^\cA(\fc, \fc) := \fc \\
				f^\cA(\fb, \fb) := \fd &
				f^\cA(\fb, \fd) := \fd &
				f^\cA(\fd, \fb) := \fd &
				f^\cA(\fd, \fd) := \fd .
			\end{array}
			$}
	For all other inputs, $f^\cA$ shall yield $\fe$ as output.
	Every domain element in $\fUA$ represents one equivalence class of the terms in $\cH$'s Herbrand domain\index{Herbrand domain} with respect to membership in the sets $P^{\cH}$, $Q^{\cH}$, and $R^{\cH}$.
	The domain element $\fa$ represents the class $[a] := \{ a \}$ of terms that belong to $P^{\cH}$ and to no other set.
	Similarly, $\fb$ represents $[b] := \{ b \}$ of terms that belong to $Q^{\cH}$ and to no other set.
	The element $\fc$ represents the class of all terms belonging to $P^{\cH} \cap R^{\cH}$, i.e.\ to the class containing $f(a,a), f(a,f(a,a))$ and so on.
	The class of terms belonging to $Q^{\cH} \cap R^{\cH}$ is represented by $\fd$.
	Finally, $\fe$ corresponds to the class of all terms that are member of $R^{\cH}$ but of none of the other predicates, e.g.\ $f(a,b), f(a, f(b,a))$.
	
	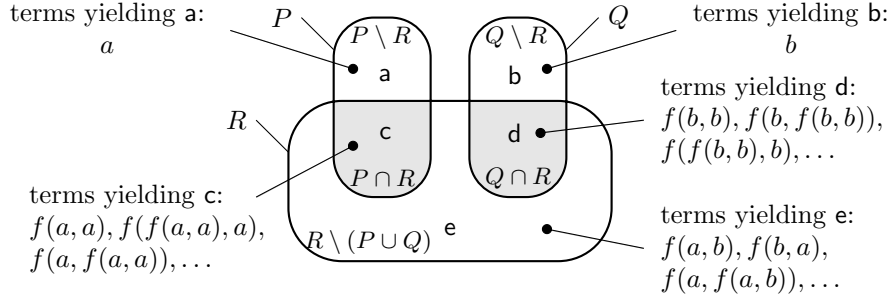
\begin{figure}
		\centering
			\begin{tikzpicture}[scale=0.85]
				\begin{scope}		
					\clip (0,0) rectangle (5,2.5);
					\filldraw[thick, fill=gray!20!white, draw=white] 
							[rounded corners = 15pt] (0.7,1) rectangle (2.2,3.8)
							[rounded corners = 15pt] (2.8,1) rectangle (4.3,3.8);
				\end{scope}		
				\draw[thick, draw=black] 
						[rounded corners = 15pt] (0.7,1) rectangle (2.2,3.8)
						[rounded corners = 15pt] (2.8,1) rectangle (4.3,3.8)
						[rounded corners = 20pt] (0,0) rectangle (5,2.5);
				\draw 		(0.8,3.5) node[anchor=west]{\small$P \setminus R$}
						(2.9,3.5) node[anchor=west]{\small$Q \setminus R$}
						(0.8,1.3) node[anchor=west]{\small$P \cap R$}
						(2.9,1.3) node[anchor=west]{\small$Q \cap R$}
						(0.1,0.3) node[anchor=west]{\small$R\setminus (P \cup Q)$};
				\draw 		(1.5,2.9) node{$\fa$}
						(3.5,2.9) node{$\fb$}
						(1.5,2) node{$\fc$}
						(3.5,2) node{$\fd$}
						(2.5,0.5) node{$\fe$};
						
				\draw [draw=black]	
						(0,1.7)	-- (-0.5,2.2)	node[anchor=east] {$R$}
						(0.7,3.3)	-- (0.2,3.8)	node[anchor=east] {$P$}
						(4.3,3.3)	-- (4.8,3.8)	node[anchor=west] {$Q$};		
									
				\filldraw [black]	
				 		(1,3)	circle	(2pt) -- (-0.9,3.6)	node[anchor=east] {\!\!\begin{tabular}{c} terms yielding $\fa$:\\ $a$ \end{tabular}}
						(4,3)	circle	(2pt) -- (6,3.6)	node[anchor=west] {\!\!\begin{tabular}{c} terms yielding $\fb$:\\ $b$ \end{tabular}}
						(1,1.8)	circle	(2pt) -- (-0.5,1)	node[anchor=north east] {\!\!\begin{tabular}{l} \strut\\[-5ex] terms yielding $\fc$:\\ $f(a,a), f(f(a,a),a),$\\ $f(a, f(a,a)), \ldots$ \end{tabular}\hspace{-3ex}\strut}
						(3.9,2)	circle	(2pt) -- (5.5,2.2)	node[anchor=west] {\!\!\begin{tabular}{l} terms yielding $\fd$:\\ $f(b,b), f(b,f(b,b)),$\\ $f(f(b,b), b), \ldots$ \end{tabular}}
						(4,0.5)	circle	(2pt) -- (5.5,0.2)	node[anchor=west] {\!\!\begin{tabular}{l} terms yielding $\fe$:\\ $f(a,b), f(b,a),$\\ $f(a, f(a,b)), \ldots$ \end{tabular}};
			\end{tikzpicture}
		\caption{Illustration of the model $\cA$ of $N$ from Example~\ref{example:MSLHclauseSet}.}
		\label{figure:ExemplaryModelOfMSLHclauseSet}
	\end{figure}
\end{example}
Next, we describe formally how to construct a finite model for the given satisfiable and finite MSLH clause set $N$.
Let $N^*$ and $\cH$ be the objects described in  Proposition~\ref{proposition:SaturationMSLHclauseSets}.
Then, we have $\cH \models N^*$ and $\cH \models N$.
Let $\fUH$ be the domain of $\cH$, i.e.\ $\fUH$ is the set of all ground terms over the vocabulary underlying $N$.
We aim at constructing a finite model $\cA \models N$ starting from $\cH$.

Let $\Pi$ denote the set of all predicates occurring in $N$, and recall that $\Pi$ contains only unary predicate symbols.
Let $\sim$ be the equivalence relation on $\fUH$ such that $s \sim t$ holds if and only if we have for every $P \in \Pi$ that $\cH \models P(s)$ if and only if $\cH \models P(t)$.

\begin{lemma}\label{lemma:congruenceArgumentColors}
	For every non-constant function symbol $f$ in $N$ of arity $m$ and all tuples $\<s_1, \ldots, s_m\>,$ $\<t_1, \ldots, t_m\> \in \fUH^m$ for which $s_i \sim t_i$ holds for every $i$ we have 
	$f(s_1, \ldots, s_m) \sim f(t_1, \ldots, t_m)$.
\end{lemma}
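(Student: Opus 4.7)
The plan is to establish, for every $P \in \Pi$, that $\cH \models P(f(s_1,\ldots,s_m))$ if and only if $\cH \models P(f(t_1,\ldots,t_m))$. Since the hypothesis $s_i \sim t_i$ is symmetric in the two tuples, it suffices to prove one direction; the other follows by swapping the roles of $\vec s$ and $\vec t$.

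The key tool is Proposition~\ref{proposition:SaturationMSLHclauseSets}, which tells us that a positive ground atom built with $f$ can hold in $\cH$ \emph{only} because some productive clause $C \in N^*$ fires, and such a clause is of the restricted shape
\[
	P_1(x_1),\ldots,P_n(x_n) \imp P\bigl(f(y_1,\ldots,y_m)\bigr),
\]
with $\{x_1,\ldots,x_n\} \subseteq \{y_1,\ldots,y_m\}$ and pairwise distinct $y_i$. Assuming $\cH \models P(f(s_1,\ldots,s_m))$, pick such a clause $C$ together with the assignment $\beta$ given by the proposition; then $\beta(y_i) = s_i$ for every $i$ and $\cH,\beta \models P_j(x_j)$ for every $j$.

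Now define the alternative assignment $\beta'(y_i) := t_i$ (well-defined since the $y_i$ are pairwise distinct). For each $j$ there is an index $k_j$ with $x_j = y_{k_j}$, so $\beta(x_j) = s_{k_j}$ and $\beta'(x_j) = t_{k_j}$. From $\cH,\beta \models P_j(x_j)$ we obtain $\cH \models P_j(s_{k_j})$, and the assumption $s_{k_j} \sim t_{k_j}$ then gives $\cH \models P_j(t_{k_j})$, i.e., $\cH,\beta' \models P_j(x_j)$. Since $\cH \models N^* \supseteq \{C\}$, the clause $C$ holds under $\beta'$, and because its antecedent is satisfied we conclude $\cH,\beta' \models P(f(y_1,\ldots,y_m))$, which unfolds to $\cH \models P(f(t_1,\ldots,t_m))$ as required.

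The only mildly delicate point is what happens to those head variables $y_i$ that do not appear among $x_1,\ldots,x_n$: the body imposes no requirement on them, so freely reassigning them from $s_i$ to $t_i$ is harmless, and linearity of the head term guarantees that $\beta'$ is consistently defined. Apart from this bookkeeping, the argument is essentially a direct application of the productivity characterization supplied by Proposition~\ref{proposition:SaturationMSLHclauseSets}, so I do not anticipate a substantive obstacle.
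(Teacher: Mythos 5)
Your proof is correct and follows essentially the same route as the paper's: both invoke Proposition~\ref{proposition:SaturationMSLHclauseSets} to obtain the productive clause $C$ and assignment $\beta$, construct the alternative assignment sending $y_i$ to $t_i$, transfer the body atoms via $s_i \sim t_i$, and conclude the head from $\cH \models C$, with the converse by symmetry. No substantive differences.
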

\begin{proof}
	By Definition of $\cH$, $\cH \models S\bigl(f(s_1, \ldots, s_m)\bigr)$ entails that there is a clause $C$ of the form $\neg P_1(x_1) \vee \ldots \vee \neg P_n(x_n) \vee S\bigl( f(y_1, \ldots, y_m) \bigr)$ in $N^*$ and a variable assignment $\beta$ that satisfy Properties~\ref{enum:SaturationMSLHclauseSets:I} to~\ref{enum:SaturationMSLHclauseSets:III} from Proposition~\ref{proposition:SaturationMSLHclauseSets}.
	Let $\gamma$ be a variable assignment for which we have $\gamma(y_i) := t_i$ for every $i$.
	Notice that such a $\gamma$ with $\bigl\< \gamma(y_1), \ldots, \gamma(y_m) \bigr\> = \<t_1, \ldots, t_m\>$ always exists because the $y_1, \ldots, y_m$ are pairwise distinct.
	Since we assume $s_i \sim t_i$ for every $i$ and because of $\{x_1, \ldots, x_n\} \subseteq \{y_1, \ldots, y_m\}$, Conditions~\ref{enum:SaturationMSLHclauseSets:II} and~\ref{enum:SaturationMSLHclauseSets:III} of Proposition~\ref{proposition:SaturationMSLHclauseSets} stipulate for every $j$ that $\beta(x_j) \in P_j^\cH$ and, hence, we also have $\gamma(x_j) \in P_j^\cH$.
	Since $\cH$ is a model of $N^*$, we have $\cH, \gamma \models C$.
	This together with $\cH, \gamma \models P_j(x_j)$, for every $j$, entails $\cH, \gamma \models S(f(y_1, \ldots, y_m))$.
	Put differently, we have $\cH \models S\bigl( f(t_1, \ldots, t_m) \bigr)$.
	
	Consequently, for every $S$ we observe that $\cH \models S\bigl( f(s_1, \ldots, s_m) \bigr)$ entails $\cH \models S\bigl( f(t_1, \ldots, t_m) \bigr)$.
	The converse direction can be shown by a symmetric argument.
\end{proof}

We now construct the finite structure $\cA$.
The universe of $\cA$ shall be $\fUA := \fUH/_\sim := \bigl\{ [s]_\sim \mid s \in \fUH \bigr\}$, where $[s]_\sim$ denotes the (unique) equivalence class with respect to $\sim$ which contains the term $s$.
For every function symbol $f$ (including constants) we set $f^\cA \bigl([s_1]_\sim, \ldots, [s_m]_\sim\bigr) := \bigl[ f(s_1, \ldots, s_m) \bigr]_\sim$ for all ground terms $s_1, \ldots, s_m$.
Finally, we define each predicate $P$ under $\cA$ by $P^\cA := \bigl\{ [s]_\sim \mid \cH \models P(s) \bigr\}$.

\begin{lemma}\label{lemma:termColor}
	Let $\gamma$ be any variable assignment over $\cA$'s domain.
	Let $\beta$ be some variable assignment over $\cH$'s domain defined such that for every $x$ we have $\gamma(x) = \bigl[\beta(x)\bigr]_\sim$.
	By definition of $\cH$, such a $\beta$ must exist.
	Then, for every term $t$ in $N$ and every predicate $P$ we have $\cA, \gamma \models P(t)$ if and only if $\cH, \beta \models P(t)$.
\end{lemma}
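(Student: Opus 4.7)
The plan is to prove the statement by reducing it to a structural induction on the term $t$ that establishes the stronger equality
\[
	t^{\cA,\gamma} \;=\; \bigl[\, t^{\cH,\beta}\, \bigr]_\sim,
\]
where $t^{\cA,\gamma}$ and $t^{\cH,\beta}$ denote the usual term evaluations under the two interpretations. Once this equality is in hand, the lemma follows from the definition of $P^{\cA}$: indeed $\cA,\gamma \models P(t)$ iff $t^{\cA,\gamma} \in P^{\cA}$ iff $[t^{\cH,\beta}]_\sim \in P^{\cA}$, and by the definition $P^{\cA} := \bigl\{[s]_\sim \mid \cH \models P(s)\bigr\}$ together with the very definition of $\sim$ (membership in $P^{\cH}$ is constant on $\sim$-classes), this is equivalent to $\cH \models P\bigl(t^{\cH,\beta}\bigr)$, i.e.\ to $\cH,\beta \models P(t)$.

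For the inductive equality, the base cases are immediate. If $t$ is a variable $x$, then $t^{\cA,\gamma} = \gamma(x) = [\beta(x)]_\sim = [t^{\cH,\beta}]_\sim$ by the very choice of $\beta$. If $t$ is a constant $c$, then $c^{\cA} = [c^{\cH}]_\sim$ directly from the construction of $\cA$. For the induction step, let $t = f(t_1, \ldots, t_m)$. The induction hypothesis gives $t_i^{\cA,\gamma} = [t_i^{\cH,\beta}]_\sim$ for every $i$, and the definition of $f^{\cA}$ yields
\[
	t^{\cA,\gamma} \;=\; f^{\cA}\bigl(t_1^{\cA,\gamma}, \ldots, t_m^{\cA,\gamma}\bigr)
	\;=\; f^{\cA}\bigl([t_1^{\cH,\beta}]_\sim, \ldots, [t_m^{\cH,\beta}]_\sim\bigr)
	\;=\; \bigl[ f(t_1^{\cH,\beta}, \ldots, t_m^{\cH,\beta}) \bigr]_\sim
	\;=\; \bigl[ t^{\cH,\beta} \bigr]_\sim.
\]

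The only subtle point, and the main obstacle, is that the definition of $f^{\cA}$ is given via representatives $[s_1]_\sim, \ldots, [s_m]_\sim$ of the arguments, so we must first check that it is independent of the chosen representatives; otherwise the second equality in the display above is not justified. This is precisely the content of Lemma~\ref{lemma:congruenceArgumentColors}: if $s_i \sim t_i$ for all $i$, then $f(s_1,\ldots,s_m) \sim f(t_1,\ldots,t_m)$, and hence $[f(s_1,\ldots,s_m)]_\sim = [f(t_1,\ldots,t_m)]_\sim$. I would open the proof by recording this well-definedness remark, then carry out the structural induction, and finally derive the biconditional on $P(t)$ as described. Everything else is routine unravelling of the definitions of $\cA$, $\sim$, and the chosen $\beta$.
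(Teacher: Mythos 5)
Your proposal is correct and follows essentially the same route as the paper's proof: the paper likewise reduces the claim to the equality $\cA(\gamma)(s) = \bigl[\cH(\beta)(s)\bigr]_\sim$, established by structural induction on the term, and invokes Lemma~\ref{lemma:congruenceArgumentColors} exactly where you do, namely to justify that $f^\cA$ is compatible with $\sim$. Your reorganization---proving the evaluation equality first and then deriving the biconditional for $P(t)$---is a cosmetic streamlining of the paper's interleaved case distinction, not a different argument.
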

\begin{proof}
	We proceed by case distinction regarding the structure of the term $t$.
	If $t = x$ is a variable, then we have
		$\cA, \gamma \models P(x)$
	if and only if
		$\gamma(x) = \bigl[\beta(x)\bigr]_\sim \in P^\cA$	
	if and only if
		$\beta(x) \in P^\cH$ 
	if and only if
		$\cH, \beta \models P(x)$.
	If $t = c$ is a constant, then we have
		$\cA, \gamma \models P(c)$
	if and only if
		$c^\cA = [c]_\sim \in P^\cA$	
	if and only if 	
		$c \in P^\cH$
	if and	only if 
		$\cH, \beta \models P(c)$.
	Suppose $t = f(s_1, \ldots, s_m)$ for some function $f$ of arity $m \geq 1$ and terms $s_1, \ldots, s_m$.
	Let $t_1, \ldots, t_m$ be ground terms such that $\cA(\gamma)(s_i) = [t_i]_\sim$. 
	Such terms exist by definition of $\cH$.
	Then, we have
		$\cA, \gamma \models P\bigl( f(s_1, \ldots, s_m) \bigr)$
	if and only if
		$f^\cA \bigl( [t_1]_\sim, \ldots, [t_m]_\sim \bigr) = \bigl[f(t_1, \ldots, t_m)\bigr]_\sim \in P^\cA$
	if and only if
		$\cH \models P \bigl( f(t_1, \ldots, t_m) \bigr)$.
	A straightforward induction on the structure of the terms $s_i$ yields $t_i \sim \cH(\beta)(s_i)$ for every $i$ (see below).
	Then, by Lemma~\ref{lemma:congruenceArgumentColors}, we have 
		$\cH \models P \bigl( f(t_1, \ldots, t_m) \bigr)$
	if and only if
		$\cH \models P \bigl( f \bigl( \cH(\beta)(s_1), \ldots, \cH(\beta)(s_m) \bigr) \bigr)$
	if and only if
		$\cH,\beta \models P \bigl( f(s_1, \ldots, s_m) \bigr)$.
	
	It remains to prove 	$t_i \sim \cH(\beta)(s_i)$ for any $i$ or, equivalently, $A(\gamma)(s_i) = \bigl[ \cH(\beta)(s_i) \bigr]_\sim$.
	We show the latter by induction on the structure of $s_i$.
	If $s_i = y$, we have $\cA(\gamma)(y) = \gamma(y) = [\beta(y)]_\sim = \bigl[ \cH(\beta)(y) \bigr]_\sim$.
	If $s_i = c$, we have $\cA(\gamma)(c) = c^\cA = [c]_\sim = \bigl[ \cH(\beta)(c) \bigr]_\sim$.
	Let $s_i = g(r_1, \ldots, r_k)$.
	Then, we have
		\begin{align*}
			\cA(\gamma)(g(r_1, \ldots, r_k) 
				&= 
					g^\cA \bigl( \cA(\gamma)(r_1), \ldots, \cA(\gamma)(r_k) \bigr) \\
				&\stackrel{\text{IH}}{=} 
					g^\cA \bigl( \bigl[ \cH(\beta)(r_1) \bigr]_\sim, \ldots, \bigl[ \cH(\beta)(r_k) \bigr]_\sim \bigr) \\
				&=
					\bigl[ g \bigl( \cH(\beta)(r_1), \ldots, \cH(\beta)(r_k) \bigr) \bigr]_\sim \\
				&=
					\bigl[ \cH(\beta)\bigl( g( r_1, \ldots, r_k) \bigr) \bigr]_\sim ~. \qedhere
		\end{align*}	
\end{proof}	

For the special case of ground terms, there is a simpler form of Lemma~\ref{lemma:termColor}:
\begin{corollary}\label{coro:gatomsfm}
	For every ground term $t$ and every predicate symbol $P$ we have
		$\cA \models P(t)$ if and only if $\cH \models P(t)$.
\end{corollary}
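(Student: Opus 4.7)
The plan is to derive the corollary as a direct specialization of Lemma~\ref{lemma:termColor} to the case where the term $t$ contains no free variables. Since $t$ is ground, the truth value of $P(t)$ under either structure is independent of any variable assignment, so the universal ``for every $\gamma$'' quantification in Lemma~\ref{lemma:termColor} collapses to a statement about $\cA$ and $\cH$ alone.

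Concretely, I would pick an arbitrary variable assignment $\gamma$ over $\fUA$ and, as in the hypothesis of Lemma~\ref{lemma:termColor}, choose some $\beta$ over $\fUH$ with $\gamma(x) = [\beta(x)]_\sim$ for every $x$; such a $\beta$ exists because every equivalence class $[s]_\sim \in \fUA$ has at least one representative in $\fUH$. Applying Lemma~\ref{lemma:termColor} to $t$ and $P$ then yields $\cA, \gamma \models P(t)$ iff $\cH, \beta \models P(t)$. Because $t$ is ground, neither side depends on the chosen assignment, so this simplifies to $\cA \models P(t)$ iff $\cH \models P(t)$, which is exactly the claim.

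There is essentially no obstacle here: the whole content of the corollary is already packaged inside Lemma~\ref{lemma:termColor}, and the only remaining task is to observe (i)~that a compatible pair $\gamma, \beta$ exists, and (ii)~that ground terms are evaluated independently of variable assignments. If one preferred a self-contained argument, one could alternatively establish by a short induction that $\cA(t) = [t]_\sim$ for every ground term $t$ (using $c^\cA = [c]_\sim$ in the base case and $f^\cA([s_1]_\sim, \ldots, [s_m]_\sim) = [f(s_1, \ldots, s_m)]_\sim$ in the step), and then conclude $\cA \models P(t)$ iff $[t]_\sim \in P^\cA$ iff $t \in P^\cH$ iff $\cH \models P(t)$ directly from the definition of $P^\cA$. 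Either route is a couple of lines.
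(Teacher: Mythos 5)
Your proof is correct and matches the paper's intent exactly: the paper presents the corollary as an immediate specialization of Lemma~\ref{lemma:termColor} to ground terms (offering no separate argument), and your observation that a compatible pair $\gamma, \beta$ exists and that ground terms are evaluated independently of the assignment is precisely what makes that specialization go through. The alternative self-contained induction you sketch is also fine, but nothing beyond your first paragraph is needed.
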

%
Using Lemma~\ref{lemma:termColor}, it is easy to show that $N$ is satisfied by the finite structure $\cA$.
\begin{lemma}\label{lemma:FiniteModelSatisfiesClauses}
	$\cA$ is a model of $N$.
\end{lemma}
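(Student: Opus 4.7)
The plan is to reduce the satisfaction of every clause $C \in N$ by $\cA$ to the satisfaction of the same clause by the Herbrand model $\cH$, which is known to hold because $\cH \models N^*$ and $N \subseteq N^*$ by Proposition~\ref{proposition:SaturationMSLHclauseSets}. Since MSLH clauses consist only of literals of the form $[\neg]P(t)$ with $P$ monadic, the bridge between the two models will be the literal-level correspondence provided by Lemma~\ref{lemma:termColor}.

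First I would fix an arbitrary clause $C \in N$ and an arbitrary variable assignment $\gamma$ over $\fUA$, and aim to show $\cA, \gamma \models C$. The key preparatory step is to lift $\gamma$ to a variable assignment $\beta$ over $\fUH$ such that $\gamma(x) = [\beta(x)]_\sim$ for every variable $x$ occurring in $C$. This is possible because every domain element of $\cA$ is, by construction, a non-empty equivalence class $[s]_\sim$ of ground terms in $\fUH$, so for each $x$ we may simply pick any representative of $\gamma(x)$ to be $\beta(x)$.

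Next I would invoke $\cH \models N$ (which follows from $\cH \models N^*$ in Proposition~\ref{proposition:SaturationMSLHclauseSets}) to obtain $\cH, \beta \models C$. Hence some literal $L$ of $C$, say $L = [\neg]P(t)$ for a monadic predicate $P$ and a term $t$, is satisfied by $\cH$ under $\beta$. Applying Lemma~\ref{lemma:termColor} to the atom $P(t)$ yields $\cA, \gamma \models P(t)$ iff $\cH, \beta \models P(t)$, and therefore also $\cA, \gamma \models L$ iff $\cH, \beta \models L$ (taking negation into account). Consequently $\cA, \gamma \models L$, and so $\cA, \gamma \models C$. Since $\gamma$ and $C$ were arbitrary, $\cA \models N$.

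The proof is essentially bookkeeping once Lemma~\ref{lemma:termColor} is in hand; the only subtlety worth stating explicitly is the existence of a suitable $\beta$ lifting $\gamma$, which is immediate from the fact that the universe of $\cA$ consists of $\sim$-classes that are non-empty by construction. Handling the negative literals correctly is trivial because Lemma~\ref{lemma:termColor} is an iff statement at the atom level.
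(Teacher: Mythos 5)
Your proof is correct and follows essentially the same route as the paper: lift $\gamma$ to a representative-picking assignment $\beta$ over $\fUH$, transfer truth of each literal via Lemma~\ref{lemma:termColor}, and conclude from $\cH \models N$. The only cosmetic difference is that the paper phrases the argument as a proof by contradiction while you argue directly; the content is identical.
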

\begin{proof}
	Let $C$ be any clause in $N$.
	Since $N$ is satisfiable, $C$ cannot be the empty clause.
	Suppose there is some variable assignment $\gamma$ over $\cA$'s domain such that $\cA, \gamma \not\models C$.
	Let $\beta$ be some variable assignment over $\cH$'s domain defined such that $\gamma(x) = [\beta(x)]_\sim$ for every $x$.
	Such a $\beta$ must exist.
	Consider any atom $P(t)$ in $C$.
	By Lemma~\ref{lemma:termColor}, we have $\cA, \gamma \models P(t)$ if and only if $\cH, \beta \models P(t)$.
	But then, $\cA, \gamma \not\models C$ entails $\cH, \beta \not\models C$.
	This contradicts our assumption that $\cH$ is a model of $N$.
	Hence, we must have $\cA \models C$.
\end{proof}

\begin{theorem}[Finite Model Property for MSLH]\label{theorem:FiniteModelPropertyForMSLH}
	Every satisfiable finite MSLH clause set $N$ has a finite model whose domain contains at most $2^p$ elements, where $p$ is the number of predicates in $N$.
\end{theorem}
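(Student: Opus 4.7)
The plan is to assemble the construction that has been set up in Lemmas \ref{lemma:congruenceArgumentColors}, \ref{lemma:termColor}, and \ref{lemma:FiniteModelSatisfiesClauses}, and to add one simple counting argument to obtain the stated bound $2^p$. So the bulk of the proof is already in place; the theorem is essentially a corollary of what precedes it.

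First, I would invoke Proposition~\ref{proposition:SaturationMSLHclauseSets} to obtain a saturated MSLH clause set $N^*$ with $N \subseteq N^*$, $N \models N^*$, together with the minimal Herbrand model $\cH \models N^*$, whose domain is the (infinite, in general) Herbrand universe $\fUH$. From $\cH$ and the equivalence relation $\sim$ on $\fUH$ defined by $s \sim t$ iff for every $P \in \Pi$, $\cH \models P(s) \Leftrightarrow \cH \models P(t)$, I would then form the quotient structure $\cA$ exactly as constructed between Lemma~\ref{lemma:congruenceArgumentColors} and Lemma~\ref{lemma:termColor}: universe $\fUA := \fUH/_\sim$, $f^\cA([s_1]_\sim,\ldots,[s_m]_\sim) := [f(s_1,\ldots,s_m)]_\sim$, and $P^\cA := \{[s]_\sim \mid \cH \models P(s)\}$. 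Lemma~\ref{lemma:congruenceArgumentColors} guarantees that $f^\cA$ is well-defined (independent of class representatives), and Lemma~\ref{lemma:FiniteModelSatisfiesClauses} directly gives $\cA \models N$.

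What remains is to bound $|\fUA|$. Here I would observe that, by construction, the equivalence class of a ground term $s$ under $\sim$ is completely determined by the set $\Pi_s := \{P \in \Pi \mid \cH \models P(s)\} \subseteq \Pi$: two terms $s,t$ satisfy $s \sim t$ iff $\Pi_s = \Pi_t$. Hence the map $[s]_\sim \mapsto \Pi_s$ is a well-defined injection from $\fUA$ into the powerset of $\Pi$. Since $|\Pi| \leq p$, we conclude $|\fUA| = |\fUH/_\sim| \leq 2^{|\Pi|} \leq 2^p$, and $\cA$ is the required finite model.

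There is essentially no genuine obstacle left: the congruence property that would normally be the sticking point for taking a quotient has already been handled by Lemma~\ref{lemma:congruenceArgumentColors}, and well-definedness of $P^\cA$ is immediate from the definition of $\sim$. The only minor subtlety worth being careful about is the edge case $p = 0$: if no predicates occur in $N$, then $\sim$ has a single class and $|\fUA| = 1 = 2^0$, which is consistent with the bound; also one should note that $\fUA$ is nonempty because $\fUH$ is nonempty (the signature provides at least one ground term, possibly after adding a fresh constant, which does not affect the bound).
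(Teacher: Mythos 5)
Your proposal is correct and takes exactly the paper's route: the theorem is obtained as an immediate consequence of the quotient construction based on Proposition~\ref{proposition:SaturationMSLHclauseSets} and Lemmas~\ref{lemma:congruenceArgumentColors}, \ref{lemma:termColor}, and \ref{lemma:FiniteModelSatisfiesClauses}. The only piece the paper leaves implicit is the cardinality bound, and you supply it correctly by injecting the $\sim$-classes into the powerset of $\Pi$.
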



\section{Model Representation Formalisms}\label{sec:modrepform}

Many known explicit first-order model representation formalisms are build on
sequences of literals, often enhanced with constraints, eventually representing
Herbrand models, e.g.,~\cite{BaumgartnerFuchsTinelli06,PiskacEtAl10,BonacinaEtAl15,AlagiWeidenbach15,BonacinaPlaisted16}, 
so called constraints atomic representations (CARMs)~\cite{CaferraLeitschPeltier04}.
A thorough discussion of all known CARM model representation formalisms is
beyond the scope of this paper. We concentrate on three basic building blocks
of known model representation formalisms: atoms with
disequality constraints~\cite{CaferraLeitschPeltier04,Comon91} (ADCs), implicit
generalizations~\cite{FermuellerPilcher07} (IGs) and atoms with membership constraints~\cite{CaferraLeitschPeltier04,tata2007} (AMCs).
They form the basis for a number of concrete model representation formalisms that actually
appear in the above mentioned calculi.
For this section we consider a fixed, finite signature $\Sigma=(\MF,\MR)$, e.g., the function and predicate
symbols occurring in some finite clause set $N$. The results in this section for all three model representations
will be the same: if terms, literals are linear, the models represented by
the respective approaches have the finite model property. We will prove this as follows:
(i)~we provide an effective linear time translation of atoms with disequality constraints to implicit generalizations;
(ii)~we provide a linear time translation of implicit generalizations to intersections of tree automata~\cite{tata2007} or complements thereof;
(iii)~we represent an atom with membership constraints by a tree automaton.
Then, because tree automata are closed under intersection and complement, potentially causing an exponential
blow up in size~\cite{tata2007},  the atoms generated by ADCs and IGs can also be represented by the accepted language of a single tree automaton.
The accepted language of a tree automaton can be represented by a finite MSLH clause set, e.g., see~\cite{JacquemardMeyerEtAl98}.
Thus, by Theorem~\ref{theorem:FiniteModelPropertyForMSLH}, linear ADCs, linear IGs, and linear AMCs have all the finite model property, i.e.,
they cannot represent models for clause sets with inherently infinite models.

A \emph{linear} ADC has the form $(A\colon x_1\neq t_1,\ldots, x_n\neq t_n)$ where the $x_i$ are all
different and occur in $A$, the $x_i$ do not occur in any $t_j$, the variables of the $t_j$ do not occur
in $A$ and $A$ as well as all $t_j$ are linear. The ground atoms generated by such an ADC
are all ground atoms $A\sigma$ such that there is no $\delta$ with $x_i\sigma = t_i\delta$ for some $i$.
A \emph{linear} IG
is an expression $A/\{B_1,\ldots,B_n\}$ where $A$ and the $B_i$ are linear atoms. Every ground instance of $A$ that is not an instance
of any $B_i$ is generated by the IG $A/\{B_1,\ldots,B_n\}$.
The ground atoms generated by an ADC are exactly the ground atoms generated by the respective linear IG
$A/\{A\{x_1\mapsto t_1\},\ldots,A\{x_n\mapsto t_n\}\}$.

A tree automaton consists of a finite set ${\cal Q}$ of states, a finite set ${\cal O}$ of operators,
a subset of accepting states ${\cal Q}_A\subseteq {\cal Q}$,
and a finite set of rules $f(q_1\ldots,q_n) \mapsto q$ where $q, q_i\in {\cal Q}$, $f\in {\cal O}$. The accepted language of a tree
automaton is inductively defined by $f(t_1,\ldots,t_n)\in q^{\cal A}$ if there is a rule $f(q_1\ldots,q_n) \mapsto q$
and $t_i\in q_i^{\cal A}$ for all $i$. The overall accepted language is then $\bigcup \;\{q^{\cal A}\mid q\in {\cal Q}_A\}$.

For example, the ground instances of the linear atom $R(x, g(a, y))$ over signature $\Sigma = (\{g,a,b\}, \{R\})$ is
the accepted language of the tree automaton ${\cal O} = \{R ,g, a, b\}$ with rules $a \mapsto q_1$, $b \mapsto q_1$,
$g(q_1,q_1) \mapsto q_1$, hence state $q_1$ accepts all ground terms, $a \mapsto q_2$, $g(q_2,q_1) \mapsto q_3$, and
$R(q_1,q_3) \mapsto q_4$ where $q_4$ is the only accepting state recognizing all ground instances of $R(x, g(a, y))$.

If $\tatoms$ is a function mapping linear atoms to a tree automata accepting the respective ground instances, then
the ground atoms generated by an IG $A/\{B_1,\ldots,$ $B_n\}$ are accepted by the tree automaton
$\tatoms(A) \cap \neg \tatoms(B_1) \cap \ldots \cap \neg \tatoms(B_n)$. Recall that tree automata
are closed under intersection ($\cap$) and complement ($\neg$), however the above
tree automaton may be exponentially larger in size compared to the size of $\tatoms(A)$ and the $\tatoms(B_i)$.

A \emph{linear} atom with membership constraint $A\colon x\in S$ is the pair of a linear atom $A$
and a constraint $x\in S$ where $x$ occurs in $A$ and $S$ is represented by a tree automaton.
It generates all ground instances $A\sigma$ where $x\sigma$ is accepted by the tree automaton
representing $S$. There is a function $\tatoms(A\colon x\in S)$ that computes in linear time
a tree automaton accepting exactly the generated ground instances of $A\colon x\in S$. Basically,
the state(s) representing the instances of $x$ in $A$ in $\tatoms(A)$ have to be replaced by the accepting states of
the tree automaton representing $S$.

Finally, tree automata can be
straight forwardly represented via MSLH clause sets. For example, the tree automaton representing
the ground instances of  $R(x, g(a, y))$ shown before, can be translated into the MSLH clause set
$\rightarrow Q_1(a)$, $\rightarrow Q_1(b)$, $Q_1(x), Q_1(y) \rightarrow Q_1(g(x,y))$,
$\rightarrow Q_2(a)$, $Q_2(x), Q_1(y) \rightarrow Q_3(g(x,y))$, and $Q_1(x), Q_3(y) \rightarrow Q_f(f_R(x,y))$.
This, together with Theorem~\ref{theorem:FiniteModelPropertyForMSLH}, eventually proves the following theorem.

\begin{theorem}
Linear disequality constraints (ADCs), linear implicit
generalizations (IGs) and linear atoms with membership constraints (AMCs) have the finite
model property.
\end{theorem}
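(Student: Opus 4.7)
The plan is to verify the three-step reduction chain outlined before the theorem statement and then invoke Theorem~\ref{theorem:FiniteModelPropertyForMSLH}. Concretely, given any model $\Ind$ representable by a finite collection of linear ADCs, linear IGs, or linear AMCs, I will construct a finite MSLH clause set $N_\Ind$ whose minimal Herbrand model coincides with $\Ind$ on the original vocabulary. Since MSLH has the finite model property, $N_\Ind$ has a finite model whose interpretation of the original predicates yields a finite structure equivalent to $\Ind$, showing that the original formalism cannot represent a model of a clause set that admits only infinite models.

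The first step is the ADC-to-IG translation already stated: a linear ADC $(A : x_1 \neq t_1, \ldots, x_n \neq t_n)$ generates the same ground atoms as the linear IG $A / \{A\{x_1 \mapsto t_1\}, \ldots, A\{x_n \mapsto t_n\}\}$, and the variable-disjointness and linearity hypotheses on linear ADCs are precisely what validate this correspondence. The second step constructs, for each linear atom $B$, a tree automaton $\tatoms(B)$ accepting its ground instances in linear time; then for a linear IG $A / \{B_1, \ldots, B_n\}$ I form $\tatoms(A) \cap \overline{\tatoms(B_1)} \cap \ldots \cap \overline{\tatoms(B_n)}$ using closure of recognizable tree languages under intersection and complement, while for a linear AMC $A : x \in S$ I plug the automaton for $S$ into the $x$-position of $\tatoms(A)$, as already sketched. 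The third step encodes each resulting tree automaton as MSLH clauses by turning every rule $f(q_1, \ldots, q_k) \mapsto q$ into the Horn clause $Q_1(x_1), \ldots, Q_k(x_k) \rightarrow Q(f(x_1, \ldots, x_k))$ with pairwise distinct $x_i$, and adding $Q(x) \rightarrow P(x)$ for each accepting state $Q$ that contributes to the predicate $P$. All such clauses are monadic, linear, shallow, and Horn, so their union is a finite MSLH clause set.

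The main obstacle is not any individual construction, each of which is standard, but the bookkeeping required to confirm that linearity is preserved along the chain and that the final clause set truly respects the strict MSLH syntactic restrictions---in particular that the fresh state predicates remain monadic and that the head atoms keep the linear shallow form $Q(f(x_1, \ldots, x_k))$. The well-known exponential blow-up from complementing nondeterministic tree automata is harmless here, since only the existence of a finite MSLH encoding is required for Theorem~\ref{theorem:FiniteModelPropertyForMSLH} to apply, not a succinct one.
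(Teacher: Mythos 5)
Your proposal follows exactly the reduction chain the paper itself uses: linear ADCs to linear IGs, linear IGs and AMCs to (Boolean combinations of) tree automata, tree automata to finite MSLH clause sets, and finally an appeal to Theorem~\ref{theorem:FiniteModelPropertyForMSLH}, with the same observation that the exponential cost of complementation is irrelevant for a pure existence argument. This matches the paper's proof in both structure and substance.
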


This result can be easily generalized to any ``Boolean combination'' of
linear ADCs, IGs, and AMCs, because tree automata are closed under Boolean operations.
Our restriction on linearity does not imply that non-linear ADCs, IGs, and AMCs do \emph{not}
have the finite model property. This is an open problem. Non-linear MSLH clause sets do
not have the finite model property, because they are as expressive as full first-order logic.


\section{Model Finding by Approximation Refinement}\label{sec:mofind}

The approximation refinement approach~\cite{TeuckeW16}
cannot show satisfiability of the simple clause set with the two unit clauses\newline
\centerline{$R(x,x), \quad \neg R(y,g(y))$.}
The approximated clause set consisting of the three clauses\newline
\centerline{$T(f_R(x,y)), \quad \neg S(z) \lor \neg T(f_R(y,z)), \quad S(g(y))$}
immediately yields a refutation.
The problem is that $R(x,x) $ cannot be refined in such a way that all instances of the
conflict clause $\neg R(y,g(y))$ are excluded. 
The refinement loop instead ends up enumerating all $R(g^i(x),g^i(x))$ but
$R(g^{i+1}(y),g^{i+2}(y))$ will always remain as a conflict clause. 

The non-termination can be resolved, if the resolution inference in the
abstracted clause set can be blocked. Our suggestion in case of reflexive
relations is \emph{reflexive relation splitting}, i.e., we split a
reflexive relation into its reflexive part $R_\rfl$ and irreflexive part $R_\irr$.
For the example, this yields\newline
\centerline{$R_\rfl(x,x), \quad \neg R_\irr(y,g(y))$}
and after approximation\newline
\centerline{$T(f_{R_\rfl}(x,y)), \quad \neg S(z) \lor \neg T(f_{R_\irr}(y,z)), \quad S(g(y))$.}
Now the approximation is saturated. The operation preserves satisfiability
because the two $R$ literals could not be resolved anyway.

In general, for each predicate $R$ with a reflexivity axiom, 
all occurrences of atoms $R(s,t)$ are replaced with $R_\rfl(s,t)$ and/or $R_\irr(s,t)$.
If $s$ and $t$ are not unifiable,  $R(s,t)$ is replaced with $R_\irr(s,t)$.
If there is an mgu $\sigma$ of $s$ and $t$, $R(s,t)$ is replaced with both
$R_\rfl(s\sigma,t\sigma)$ and $R_\irr(s,t)$.
More precisely, any clause
$C \vee [\neg]R(s,t)$ is replaced by two clauses:
$C \vee [\neg]R_\irr(s,t)$ and $C\sigma \vee [\neg]R_\rfl(s\sigma,t\sigma)$.
The process is repeated until all atom occurrences with $R$ have been replaced.
In the final result, any clause that contains an atom of the form $R_\irr(s,s)$ can be deleted.

More formally, the following transition system 
replaces a reflexive $R$ by the two new predicates.

\smallskip\noindent
\shortrules{Irreflexive}{$N\disjun\{[\neg]R(s,t)\lor C\}$}{$N\disjun\{[\neg]R_\irr(s,t)\lor C\}$}{provided $s$ and $t$ are not unifiable}{RRS}{10.5}

\smallskip\noindent
\shortrules{Reflexive}{$N\disjun\{[\neg]R(s,t)\lor C\}$}{$N\disjun\{[\neg]R_\irr(s,t)\lor C, [\neg]R_\rfl(s\sigma,t\sigma)\lor C\sigma\}$}{provided $s$ and $t$ are unifiable by an mgu $\sigma$}{RRS}{10.5}

\smallskip\noindent
\shortrules{Delete}{$N\disjun\{[\neg]R_\irr(s,s)\lor C\}$}{$N$}{\vspace{-4ex}}{RRS}{10}

\begin{lemma}
$\Rightarrow_\text{RRS}$ is terminating and confluent.
\end{lemma}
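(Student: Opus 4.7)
The plan is to prove termination first and then deduce confluence from Newman's lemma.

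For termination, I would attach to every clause set $N$ the multiset $M(N)$ consisting of the $R$-counts $\cc_R(C)$ of its clauses, where $\cc_R(C)$ denotes the number of atoms with predicate symbol $R$ occurring in $C$, and compare such multisets by the Dershowitz--Manna multiset extension of the usual order on $\mathbb{N}$. The Irreflexive rule turns a clause of $R$-count $k \geq 1$ into one of $R$-count $k - 1$, strictly decreasing $M(N)$. The Reflexive rule replaces a clause of $R$-count $k \geq 1$ by two clauses of $R$-count $k - 1$, which is still a strict decrease since $k - 1 < k$. The Delete rule removes a clause outright, trivially shrinking $M(N)$. Well-foundedness of the multiset order on $\mathbb{N}$ then yields termination.

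For confluence, since termination is already established, by Newman's lemma it suffices to prove local confluence. I would analyse any peak $N_1 \Leftarrow N \Rightarrow N_2$ by cases on how the two redexes overlap. If they lie in different clauses, the two steps commute verbatim, since each rule modifies only the clause containing its redex. If a Delete step overlaps an Irreflexive or Reflexive step in the same clause, the whole clause disappears on one side in a single step, while on the other side the Irreflexive or Reflexive step produces one or two clauses that all still contain an instance of $R_\irr(s,s)$, because the problematic atom is either left untouched or instantiated to an atom of the same shape; subsequent Delete steps then remove all of them, and both paths join. If two Irreflexive/Reflexive steps act on two distinct $R$-atoms of the same clause, one inspects the at most four combinations of $R_\irr/R_\rfl$ labellings that can arise. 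The key point is that the two mgu's involved compose into a most general simultaneous unifier of both atom pairs, and such unifiers are unique up to renaming; therefore applying the two rules in either order produces the same collection of clauses modulo variable renaming and the removal of duplicate literals.

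The delicate case is the last one: the mgu computed in the first rule application can affect the arguments of the second atom and thus change which mgu is used in the second step, and the two Reflexive applications might even touch shared variables. The ingredient that makes the two orders of application converge is precisely that compositions of such mgu's are themselves most general simultaneous unifiers of both atom pairs, hence unique up to renaming; this lets the four final clauses on each side be matched up pairwise.
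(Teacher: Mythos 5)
Your proof follows essentially the same route as the paper: termination via a well-founded multiset measure on $R$-occurrences, local confluence by a case analysis whose only delicate point is two $R$-atoms in one clause with interacting unifiers, and Newman's Lemma to conclude. The only (harmless) variations are that your single multiset measure already handles Delete directly, where the paper uses a lexicographic combination with the count of $R_\irr$-occurrences, and that you settle the critical overlap via uniqueness of simultaneous mgus, which in particular yields the paper's key observation that non-unifiability of $s'\sigma,t'\sigma$ forces non-unifiability of $s\tau,t\tau$.
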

\begin{proof}[Proof sketch]
Termination is easy to prove.
Each application of the rules {Irreflexive} or {Reflexive} reduces the multiset of the numbers of $R$-occurrences
in all clauses, and no new occurrences of $R$ are ever introduced when {Delete} is applied. 
Each application of the rule {Delete} reduces the number of occurrences of $R_\irr$.
Combining these two properties into a well-founded multi-set-based ordering completes the proof of termination.
For local confluence, the non-obvious case is a clause $[\neg] R(s,t)\lor [\neg] R(s',t') \lor C$
where $R(s,t)$ and $R(s',t')$ share variables, 
and without loss of generality, $s$ and $t$ are unifiable by the mgu $\sigma$.
Applying first the reflexive transformation to the first literal yields the two clauses
 $[\neg] R_\rfl(s\sigma,t\sigma)\lor [\neg] R(s'\sigma,t'\sigma) \lor C\sigma$ and
$[\neg] R_\irr(s,t)\lor [\neg] R(s',t') \lor C$. Now the interesting case is where $s'$, $t'$ are
unifiable but $s'\sigma$ and $t'\sigma$ are not. Then we get with the mgu $\tau$ of $s',t'$:
$[\neg] R_\rfl(s\sigma,t\sigma)\lor [\neg] R_\irr(s'\sigma,t'\sigma) \lor C\sigma$,
$[\neg] R_\irr(s,t)\lor [\neg] R_\irr(s',t') \lor C$, and 
$[\neg] R_\irr(s\tau,t\tau)\lor [\neg] R_\rfl(s'\tau,t'\tau) \lor C\tau$.
This is also exactly the result we get when starting with a translation of $[\neg] R(s',t')$:
if $s'\sigma,t'\sigma$ are not unifiable, then $s\tau,t\tau$ are not unifiable as well.
For otherwise, $s\tau\tau', t\tau\tau'$ for unifier $\tau'$ is an instance of $s\sigma,t\sigma$,
so $s'\sigma,t'\sigma$ must be unifiable as well, a contradiction to the above assumption.
All other cases are similar to this case.
By Newman's Lemma, termination plus local confluence implies confluence.
\end{proof}

Given any finite clause set $N$, we write $\rrs(N)$ to address the normal form of $N$ after exhaustively applying $\Rightarrow_\text{RRS}$.
Notice that any clause $D\in \rrs(N)$ is an
instance of a clause in $N$ with respect to the renaming of $R_\rfl$, $R_\irr$ with $R$.
Moreover, we use $\rrs(C)$ as shorthand for $\rrs(\{C\})$ for any clause $C$.

\begin{lemma}[Reflexive Relation Splitting] \label{lem:reflrelsplit}
Let $N$ be a finite clause set that does not contain the predicates $R_\rfl$ and $R_\irr$. 
$N$ is satisfiable if and only if $\rrs(N)$ is satisfiable.
\end{lemma}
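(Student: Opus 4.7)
For the forward direction, given a model $\cA$ of $N$, I would extend $\cA$ to an interpretation $\cA'$ of the enlarged signature by setting $R_\rfl^{\cA'}(a,b) := R_\irr^{\cA'}(a,b) := R^\cA(a,b)$ for all $a,b$, so that both new predicates coincide with $R$. A straightforward induction on the length of the rewrite sequence $N \Rightarrow_\text{RRS}^* \rrs(N)$ then shows that $\cA'$ satisfies every intermediate clause set: Irreflexive simply renames $R$ to $R_\irr$ in one literal of a clause, which is semantically identical under $\cA'$; Reflexive produces one renamed copy and one $\sigma$-instance of the original with $R$ renamed to $R_\rfl$, both entailed by $\cA \models N$; and Delete only removes clauses, which preserves satisfaction trivially.

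For the backward direction, given $\cA' \models \rrs(N)$, I would define $\cA$ by retaining all interpretations from $\cA'$ and setting $R^\cA(a,b) := R_\rfl^{\cA'}(a,b)$ if $a = b$, and $R^\cA(a,b) := R_\irr^{\cA'}(a,b)$ otherwise. The plan is to show $\cA \models N$ by induction on the rewrite sequence, exploiting complementary coverage: when $s,t$ are non-unifiable (Irreflexive), every grounding $\tau$ satisfies $s\tau \neq t\tau$ and so $R^\cA$ coincides with $R_\irr^{\cA'}$ on the relevant arguments, making the kept clause match the original under $\cA$; when $s,t$ are unifiable with mgu $\sigma$ (Reflexive), the kept $R_\irr$-clause handles groundings with $s\tau \neq t\tau$ (off-diagonal, where $R^\cA = R_\irr^{\cA'}$), while the kept clause $C\sigma \lor [\neg] R_\rfl(s\sigma,t\sigma)$, evaluated at the factor $\delta$ with $\tau = \sigma\delta$, handles the remaining groundings where $s\tau = t\tau$ (on-diagonal, where $R^\cA = R_\rfl^{\cA'}$).

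The main obstacle is the soundness of Delete in the backward direction. A deleted clause $[\neg] R_\irr(s,s) \lor C$ has its distinguished $R_\irr$-literal fixed on the diagonal under every grounding, but in $\cA$'s interpretation the diagonal of $R$ is decoded from $R_\rfl^{\cA'}$, not from $R_\irr^{\cA'}$. Thus the on-diagonal constraint on $R^\cA$ that this descendant would contribute to its ancestor in $N$ is already carried by the $R_\rfl$-companion clause produced by the Reflexive step that introduced this $R_\irr$-literal in the first place (together with that companion's own subsequent transformations). I would formalize this by tagging each descendant with the slice of its ancestor's ground instances it is responsible for, and verifying that Delete drops only obligations already delegated to surviving $R_\rfl$-clauses; the inductive step then goes through because the deletion only affects $R_\irr$ on arguments that are irrelevant to the backward decoding of $R$.
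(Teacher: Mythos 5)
Your proposal is essentially correct and its core idea coincides with the substance of the paper's proof, but you organize the argument differently, and the difference matters most exactly where you say it does: at the Delete rule. The paper first uses confluence to fix a derivation strategy in which Delete is applied with priority, and then folds Delete into the Reflexive rule (splitting it into variants that simply never generate a clause containing an atom $R_\irr(s',s')$). With that reorganization it proves a single local equivalence (``Claim~I''): for any \emph{fixed} interpretation $\I{}$ in which $R_\irr$ agrees with $R$ everywhere, $R_\rfl$ agrees with $R$ on the diagonal and is empty off it, each rewrite step preserves satisfaction in \emph{both} directions. The whole lemma is then a one-line induction along the derivation, with the only model surgery happening once at either end (where it is harmless because $N$ does not mention $R_\rfl, R_\irr$ and $\rrs(N)$ does not mention $R$). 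Your forward direction (set $R_\rfl = R_\irr = R$) is the same argument in slightly different clothing and is fine. Your backward direction decodes $R$ piecewise from $R_\rfl$ on the diagonal and $R_\irr$ off it, which is arguably the more natural choice and makes the Irreflexive and Reflexive cases go through cleanly via the factorization $\tau = \sigma\delta$, exactly as in the paper's case analysis. What your route buys is that you never have to touch the given model of $\rrs(N)$; what it costs is that the step-by-step backward induction genuinely breaks at Delete, since the deleted clause has no witness in the successor set, and you are forced into a global ancestor/descendant bookkeeping argument. That argument does work --- the cleanest way to phrase it is that a descendant whose distinguished literal is syntactically $R_\irr(s',s')$ is responsible for the ground instances of its ancestor in which that literal is off-diagonal, and this set is empty, so Delete drops no obligation (the diagonal instances having been delegated to the $R_\rfl$-companion produced by the same Reflexive step, which cannot originate from Irreflexive since non-unifiability is preserved under instantiation) --- but as written it is a sketch of a nontrivial invariant rather than a proof, and it is the one place where your proposal would need real additional work. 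The paper's prioritized-Delete reformulation is precisely the device that makes this bookkeeping unnecessary.
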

\begin{proof}
Since $\Rightarrow_\text{RRS}$ is confluent, we may assume that $\rrs(N)$ has been 
derived from $N$ by $\Rightarrow_\text{RRS}$ using a strategy that applies the rule Delete 
with priority.
In other words, whenever there is some clause that may be deleted by Delete, we delete it at once.
For the rest of the proof, we incorporate this strategy into the rewrite system.
Hence, we pretend that $\Rightarrow_\text{RRS}$ comprises the following modified rules:

\smallskip\noindent
\shortrules{Irreflexive}{$N\disjun\{[\neg]R(s,t)\lor C\}$}{$N\disjun\{[\neg]R_\irr(s,t)\lor C\}$}{provided $s$ and $t$ are not unifiable}{RRS}{10.5}

\smallskip\noindent
\shortrules{Reflexive$_1$}{$N\disjun\{[\neg]R(s,t)\lor C\}$}{$N\disjun\{[\neg]R_\irr(s,t)\lor C, [\neg]R_\rfl(s\sigma,t\sigma)\lor C\sigma\}$}{provided $s$ and $t$ are distinct but unifiable by mgu $\sigma$ and $C\sigma$ does not contain any atom $R_\irr(s',s')$}{RRS}{10.5}

\smallskip\noindent
\shortrules{Reflexive$_2$}{$N\disjun\{[\neg]R(s,t)\lor C\}$}{$N\disjun\{[\neg]R_\irr(s,t)\lor C\}$}{provided $s$ and $t$ are distinct but unifiable by mgu $\sigma$ and $C\sigma$ contains some atom $R_\irr(s',s')$}{RRS}{12}

\smallskip\noindent
\shortrules{Reflexive$_3$}{$N\disjun\{[\neg]R(s,s)\lor C\}$}{$N\disjun\{[\neg]R_\rfl(s,s)\lor C\}$}{\vspace{-3ex}}{RRS}{12.5}

\smallskip\noindent
We now prove the following auxiliary result: \newline
\textit{Claim~I:}
Let $N_1, N_2$ be clause sets such that
$N_1 \Rightarrow_\text{RRS} N_2$.
Moreover, let $\I{}$ be any Herbrand interpretation such that \newline
(1) for every ground term $s$ we have $R_\rfl(s,s) \in \I{}$ if and only if $R(s,s) \in \I{}$, \newline
(2) for all ground terms $s, t$ and all $R_\rfl(s,t) \in \I{}$ we have $s=t$, \newline
(3) for all ground terms $s, t$ we have $R_\irr(s,t) \in \I{}$ if and only if $R(s,t) \in \I{}$.
Then, $\I{} \models N_1$ if and only if $\I{} \models N_2$. \newline
\textit{Proof of Claim~I:}
We distinguish three cases concerning the derivation step from $N_1$ to $N_2$, 
where the terms $s, t$ may be non-ground:
\begin{enumerate}
	\item[(a)] $N_1 = N' \uplus \{ C' \vee [\neg] R(s,t)\}$ and 
			$N_2 = N' \uplus \{ C' \vee [\neg] R_\irr(s,t)\}$,
	\item[(b)] $N_1 = N' \uplus \{ C' \vee [\neg] R(s,t) \}$ where $s, t$ are distinct but unifiable by the mgu~$\sigma$ and 
			$N_2 = N' \uplus \{ C' \vee [\neg] R_\irr(s,t), \; C'\sigma \vee [\neg] R_\rfl(s\sigma,t\sigma)\}$,
	\item[(c)] $N_1 = N' \uplus \{ C' \vee [\neg] R(s,s) \}$ and $N_2 = N' \uplus \{ C' \vee [\neg] R_\rfl(s,s) \}$.
\end{enumerate}

We first consider the ``only if'' direction.
Hence, we assume $\I{} \models N_1$ and, in particular, $\I{} \models C' \vee [\neg] R(s,t)$ in cases (a), (b) and (c) and $\I{} \models C' \vee [\neg] R(s,t)$ in case (d).

\smallskip
\noindent
Ad (a).
Suppose $\I{} \not\models N_2$.
Then, we must have $\I{} \not\models C'\rho \vee [\neg]R_\irr(s,t)\rho$ for some grounding substitution $\rho$.
This entails $\I{} \not\models C'\rho$.
The latter, because of $\I{} \models C'\rho \vee [\neg]R(s,t)\rho$, leads to $\I{} \models [\neg]R(s,t)\rho$.
Moreover, by definition of $\I{}$
, we have $R(s,t)\rho \in \I{}$ if and only if $R_\irr(s,t)\rho \in \I{}$.
But this contradicts $\I{} \not\models [\neg]R_\irr(s,t)\rho$.

\smallskip
\noindent
Ad (b).
Suppose $\I{} \not\models N_2$.
Then, we either have $\I{} \not\models C'\rho \vee [\neg]R_\irr(s,t)\rho$ for some grounding substitution $\rho$, or
$\I{} \not\models C'\sigma\rho \vee [\neg]R_\rfl(s\sigma,t\sigma)\rho$ for the mgu $\sigma$ of $s, t$ and for some grounding substitution $\rho$.
Consider the former case.
In analogy to (a), we derive $\I{} \not\models [\neg]R_\irr(s,t)\rho$ and $\I{} \models [\neg]R(s,t)\rho$.
By definition of $\I{}$, we have $R(s,t)\rho \in \I{}$ if and only if $R_\irr(s,t)\rho \in \I{}$ -- a contradiction.
Consider the latter case.
In analogy to (a), we derive $\I{} \not\models [\neg]R_\rfl(s\sigma,t\sigma)\rho$ and $\I{} \models [\neg]R(s\sigma,t\sigma)\rho$.
Since $\sigma$ unifies $s$ and $t$, we get $s\sigma\rho = t\sigma\rho$.
By definition of $\I{}$, this entails $R(s\sigma,t\sigma)\rho \in \I{}$ if and only if $R_\rfl(s\sigma,t\sigma)\rho \in \I{}$.
Again, this constitutes a contradiction.

\smallskip
\noindent
Ad (c).
Suppose $\I{} \not\models N_2$.
Then, we have $\I{} \not\models C'\rho \vee [\neg]R_\rfl(s,t)\rho$ for some grounding substitution $\rho$.
In analogy to (a), we derive $\I{} \not\models [\neg]R_\rfl(s,s)\rho$ and $\I{} \models [\neg]R(s,s)\rho$.
By definition of $\I{}$, this entails $R(s,s)\rho \in \I{}$ if and only if $R_\rfl(s,s)\rho \in \I{}$.
Again, this constitutes a contradiction.

\smallskip
\noindent
Next, we consider the ``if'' direction, i.e.\ we assume $\I{}\models N_2$.

\smallskip
\noindent
Ad (a).
We have $\I{} \models C' \vee [\neg] R_\irr(s,t)$.
Suppose $\I{} \not\models N_1$.
Then, there must be some grounding substitution $\rho$ for which $\I{} \not\models C'\rho \vee [\neg] R(s,t)\rho$.
This entails $\I{} \not\models C'\rho$, which, because of $\I{} \models C'\rho \vee [\neg]R_\irr(s,t)\rho$, leads to $\I{} \models [\neg]R_\irr(s,t)\rho$.
By definition of $\I{}$, we then have $R_\irr(s,t)\rho \in \I{}$ if and only if $R(s,t)\rho \in \I{}$.
This yields a contradiction, as we have $\I{} \not\models [\neg]R(s,t)\rho$.

\smallskip
\noindent
Ad (b).
We have $\I{} \models C' \vee [\neg] R_\irr(s,t)$ and $\I{} \models C'\sigma \vee [\neg] R_\rfl(s\sigma,t\sigma)$.
Suppose $\I{} \not\models N_1$.
Again, we get $\I{} \not\models C'\rho \vee [\neg] R(s,t)\rho$ for some grounding $\rho$.
This entails $\I{} \not\models C'\rho$, which leads to $\I{} \models [\neg] R_\irr(s,t)\rho$.
By Assumption~(3), we get $R(s,t)\rho \in \I{}$ if and only if $R_\irr(s,t)\rho \in \I{}$.
But then, $\I{} \not\models [\neg] R(s,t)\rho$ contradicts $\I{} \models [\neg] R_\irr(s,t)\rho$.

\smallskip
\noindent
Ad (c).
We have $\I{} \models C' \vee [\neg] R_\rfl(s,s)$.
Suppose $\I{} \not\models N_1$.
Then, we get $\I{} \not\models [\neg] R(s,s)\rho$ for some grounding substitution $\rho$.
But this contradicts Assumption~(1), which entails that $R_\rfl (s\rho,s\rho) \in \I{}$ if and only if $R(s\rho,s\rho) \in \I{}$.

This finishes the proof of Claim~I.
\hfill$\Diamond$

Consider the derivation
$N \Rightarrow_\text{RRS} N_1 \Rightarrow_\text{RRS} \ldots \Rightarrow_\text{RRS} \rrs(N)$
based on the modified rewrite system with integrated, prioritized Delete.
Let $\I{}$ be a Herbrand model of $N$.
Since $N$ does not contain the predicates $R_\rfl$ and $R_\irr$, we can bring $\I{}$ into the shape that meets the conditions of Claim~I and still ensure that $\I{} \models N$.
It then follows that $\I{} \models N$, $\I{} \models N_1, \ldots, \I{} \models \rrs(N)$.
Symmetrically, let $\I{}$ be a Herbrand model of $\rrs(N)$.
Since $\rrs(N)$ does not contain the predicate $R$, we can reshape $\I{}$ so that Claim~I is applicable and $\I{}$ is still a model of $\rrs(N)$.
Then, we get $\I{} \models \rrs(N), \ldots, \I{}\models N$.
Consequently, $N$ is satisfiable if and only if $\rrs(N)$ is.
\end{proof}

Notice that the above lemma holds independently of the fact whether there is a reflexivity
clause in $N$ or not. Such a clause would, of course, also be  transformed by  $\Rightarrow_{\text{RRS}}$.

Let us take a look at an example that is a little bit more involved. 
Consider an equivalence relation $R$ with 
the respective axiom clauses.
\begin{equation*}
\begin{split}
	  &\imp R(x,x) \\
R(x,y) &\imp R(y,x) \\
R(x,y),R(y,z) &\imp R(x,z) \\
\end{split}
\end{equation*}
Applying $\Rightarrow_\text{RRS}$ exhaustively 
results in the clause set
\begin{equation*}
	\begin{split}
		&\imp R_\rfl(x,x) \\
		R_\irr(x,y) &\imp R_\irr(y,x) \\
		R_\rfl(x,x) &\imp R_\rfl(x,x) \\
		R_\irr(x,y),R_\irr(y,z) &\imp R_\irr(x,z) \\
		R_\irr(x,y),R_\irr(y,x) &\imp R_\rfl(x,x) \\
		R_\irr(x,y),R_\rfl(y,y) &\imp R_\irr(x,y) \\
		R_\rfl(x,x),R_\irr(x,z) &\imp R_\irr(x,z) \\
		R_\rfl(x,x),R_\rfl(x,x) &\imp R_\rfl(x,x). \\
	\end{split}
\end{equation*}
After removing redundant clauses, we are conveniently left with just
\begin{equation*}
\begin{split}
&\imp R_\rfl(x,x) \\
R_\irr(x,y) &\imp R_\irr(y,x) \\
R_\irr(x,y),R_\irr(y,z) &\imp R_\irr(x,z) \\
\end{split}
\end{equation*}
which are no longer trivialized by the linear approximation $T(f_R(x,y))$ of the reflexivity axiom.
See also the example in the introduction, Section~\ref{sec:intro}, for another application of reflexive relation splitting.

The rule Reflexive replaces a clause by two clauses and can, therefore, cause an 
exponential blow up in the number of generated clauses. However, this is only the case for a clause
with several occurrences $R(s_i,t_i)$ such that the respective term pairs are all simultaneously unifiable.
This situation can be detected and then reflexive relation splitting may not be efficiently applicable.
However, the above example on the equivalence relation $R$ shows that in the case of variable chains as they
occur in the transitivity axiom, all of the eventually generated clauses become redundant, except one:
$R_\irr(x,y),R_\irr(y,z) \imp R_\irr(x,z)$. We have integrated reflexive relation splitting into {\sc Spass}-AR~\cite{TeuckeW16}
and have run it on the overall TPTP~\cite{Sut09}. There is no example in TPTP~v.7.2.0 showing the exponential blow
up and the set of problems solved by {\sc Spass}-AR with reflexive relation splitting is strictly larger
than without.

Nevertheless, reflexive relation splitting is, of course, not sufficient to transform all problems
with inherently infinite models based on a (ir)reflexive relation into clause sets that can eventually 
be decided by AR
via MSLH clause sets. Consider a strict partial ordering without endpoints:
\begin{equation*}
\begin{split}
R(x,x) &\imp  \\
       & \imp R(x, g(x)) \\
R(x,y), R(y,z)  &\imp  R(x,z).\\ 
\end{split}
\end{equation*}
Reflexive relation splitting yields
\begin{equation*}
\begin{split}
R_\rfl(x,x) &\imp  \\
       & \imp R_\irr(x, g(x)) \\
R_\irr(x,y), R_\irr(y,z)  &\imp  R_\irr(x,z)\\ 
R_\irr(x,y), R_\irr(y,x)  &\imp  R_\rfl(x,x)\\ 
\end{split}
\end{equation*}
but after approximation, the abstraction refinement does not terminate on the example. The reason is
the approximation of the clause $R_\irr(x, g(x))$ into the two clauses $S(g(x))$ and $S(y)\imp T(f_{R_\irr}(x, y))$
where the property is lost that in any ground instance of  $R_\irr(x, g(x))$ the first argument
has one occurrence of $g$ less than the second. This was resolved in~\cite{Peltier09} by the 
use of tuple tree automata.


\section{Discussion} \label{sec:discussion}

We have shown that the MSLH clause fragment has the finite model property
and can therefore not represent models of clause sets with inherently infinite models.
This applies to the model representation building blocks atoms with disequality constraints,
implicit generalizations, and atoms with membership constraints as well, if atoms
and terms are linear. For non-linear terms, our finite model property proof breaks,
and, in fact, the example from the introduction shows already that non-linear atoms
can represent models for clause sets with inherently infinite models.

Unsatisfiability of monadic shallow Horn clause sets is undecidable.
One occurrence of a clause $\Gamma \imp S(f(x,x))$ suffices to this end.
This can be seen by a respective monadic reformulation of the PCP encoding from the introduction.
On the other hand, models represented by ground instances of finite sets of (linear or non-linear)
atoms are also restricted in expressivity, because they cannot express any recursive structure.
For example, MSLH clause sets and extensions thereof have been successfully used
for the analysis of security protocols~\cite{Weidenbach99cade,Blanchet04} where (counter-) models cannot 
be expressed by ground instances of finite sets of atoms. In summary, and not surprisingly,
there is currently no unique superior model representation formalism.

If models are eventually constructed through the reversal of an approximation, the used representation
may have the finite model property and can still show satisfiability of clause sets
with inherently infinite models. We obtained this result via reflexive relation splitting.
This insight is already a consequence of \cite{Peltier09}. There, an approximation into
a theory of tuple tree automata is described and it is even complete with respect to models
generated out of these automata. We can currently not provide such a completeness
result although this would be highly desirable. On the other hand, our techniques are embedded
into a refutationally complete procedure, whereas the approach in \cite{Peltier09}
can only show satisfiability.


\end{document}